\relax
%File: formatting-instruction.tex
\documentclass[letterpaper]{article} %DO NOT CHANGE THIS

\usepackage{aaai19}

% Use the postscript times font!
\usepackage{times}
\usepackage{xcolor}
\usepackage{soul}
\usepackage[utf8]{inputenc}
\usepackage[small]{caption}
\usepackage{balance}

\usepackage{url}  %Required
\usepackage{graphicx}  %Required
\usepackage{mathtools}
\usepackage{amsthm}
\usepackage{amssymb} %KYR: added for Box, Diamond
\usepackage{algorithm}
\usepackage[noend]{algorithmic}
\usepackage{xspace}
\usepackage{pgf}
\usepackage{tikz}
\usetikzlibrary{arrows,automata}

%%%%%%%%%%%%%%%%%%%%%%%%%%%%%%%%%%%%%%%%%%%%%%%%%%%%%%%%
% Figure Magic (KYR)
%%%%%%%%%%%%%%%%%%%%%%%%%%%%%%%%%%%%%%%%%%%%%%%%%%%%%%%%
\usepackage{float}
\restylefloat{table}
%\usepackage{framed}
%\usepackage{subfigure} 
 %figures can take up at most 99% of the page before being alone
 %figures can take up at most 99% of the page before being alone
 %at most this % of page will be text before making figure-only page
\addtolength{\textfloatsep}{-5mm}
%\addtolength{\floatsep}{-5mm}
%\addtolength{\intextsep}{-5mm}
\addtolength{\abovecaptionskip}{-3mm}
%\addtolength{\belowcaptionskip}{-2mm}
%%%%%%%%%%%%%%%%%%%%%%%%%%%%%%%%%%%%%%%%%%%%%%%%%%%%%%%
\usepackage{times} %KYR: substitute for pslatex that does not mess up math :)
\usepackage{enumitem} %KYR: eliminate extra spaces in itemize environment
\setlength{\intextsep}{4pt} %KYR: reduce space below table?

\frenchspacing  %Required
\setlength{\pdfpagewidth}{8.5in}  %Required
\setlength{\pdfpageheight}{11in}  %Required
%PDF Info Is Required:
  \pdfinfo{
/Title (SAT-based Explicit LTL_f Satisfiability Checking)
/Author (AAAI Press Staff)}
\setcounter{secnumdepth}{0}

%%%%%%%%%%%%%%%%%%%%%%%%%% General

%% general math

 \newcommand{\F}{\mathcal{F}}
\newcommand{\G}{\mathcal{G}}

 \newcommand{\N}{\mathcal{N}}
 
 \newcommand{\R}{\mathcal{R}}
 
\newcommand{\U}{\mathcal{U}} 
 \newcommand{\X}{\mathcal{X}}

%%\newcommand{\limp}{\supset}

%%\newcommand{\isdef}{\hbox{$\,\,\stackrel{{\scriptstyle def}}{=}\,\,$}}

%% LTL

%%\newcommand{\Next}{\raisebox{0.4ex}{\tiny$\bigcirc$}}
%\renewcommand{\Until}{\mathop{\U}}

%% Logics

%\newcommand{\ATA}{{\sc ata}\xspace}

%% Abbreviations

\newcommand{\Nat}{{\rm I\kern-.23em N}}

%%% Local Variables:
%%% mode: latex
%%% TeX-master: "main"
%%% save-place: t
%%% End:

%%Defined by Jianwen Li
\newcommand{\ltlf}{\mathsf{LTL}_f}
\newcommand{\ff}{\mathsf{ff}}
\renewcommand{\tt}{\mathsf{tt}}
\newtheorem{theorem}{Theorem}
\newtheorem{lemma}{Lemma}

\newtheorem{definition}{Definition}

\newcommand{\xnf}[1]{\textsf{xnf}(#1)}
\newcommand{\tnf}[1]{\textsf{tnf}(#1)}
\newcommand{\tran}[1]{\xrightarrow[]{#1}}

\newcommand{\cdlsc}{\textsf{CDLSC}\xspace}

\newcommand{\cs}{\mathcal{C}}
\newcommand{\p}[1]{\textsf{PA}(#1)}

\begin{document}
\iffalse
% The file aaai.sty is the style file for AAAI Press 
% proceedings, working notes, and technical reports.
%
\vspace{-1in}
\fi
\title{SAT-based Explicit $\ltlf$ Satisfiability Checking\thanks{%A full version is available at [website to be filled]. 
Geguang Pu and Kristin Y. Rozier are corresponding authors.}
%\vspace{-0.1in}
}

%\thanks{Work supported in part by NSF CAREER Award CNS-1552934 and NASA ECF NNX16AR57G.}

\author{Jianwen Li, Kristin Y. Rozier\\ Iowa State University\\ Ames, IA, USA\\\{jianwen,kyrozier\}@iastate.edu 
\And Geguang Pu, Yueling Zhang\\ East China Normal University\\ Shanghai, China\\\{ggpu,ylzhang\}@sei.ecnu.edu.cn
\And Moshe Y. Vardi\\ Rice University\\ Houston, TX, USA\\ vardi@cs.rice.edu}

\maketitle

\vspace{-0.5in}
\begin{abstract}
We present here a SAT-based framework for $\ltlf$ (Linear Temporal Logic on Finite Traces) satisfiability checking. We use propositional SAT-solving techniques to construct a transition system for the input $\ltlf$ formula; satisfiability checking is then reduced to a path-search problem over this transition system. Furthermore, we introduce \cdlsc (Conflict-Driven $\ltlf$ Satisfiability Checking), a novel algorithm that leverages information produced by propositional SAT solvers from both satisfiability and unsatisfiability results. Experimental evaluations show that \cdlsc outperforms all other existing approaches for $\ltlf$ satisfiability checking, by demonstrating an approximate four-fold speed-up compared to the second-best solver. 
\end{abstract}

\section{Introduction}\label{sec:intro}

Linear Temporal Logic over Finite Traces, or $\ltlf$, is a formal language gaining popularity in the AI community for formalizing and validating system behaviors. While standard Linear Temporal Logic (LTL) is interpreted on infinite traces \cite{Pnu77}, $\ltlf$ is interpreted over finite traces \cite{GV13}.  While LTL is typically used in formal-verification settings, where we are interested in nonterminating computations, cf. \cite{Var07a},  $\ltlf$ is more attractive in AI scenarios focusing on finite behaviors, such as planning \cite{BK98,DV99,CDV02,PLGG11,CBMM17}, plan constraints \cite{BK00,Gab04}, and user preferences \cite{BFM06,BFM11,SBM11}. Due to the wide spectrum of applications of $\ltlf$ in the AI community \cite{DMM14}, it is worthwhile to study and develop an efficient framework for solving $\ltlf$-reasoning problems. Just as propositional satisfiability checking is one of the most fundamental propositional reasoning tasks, $\ltlf$ satisfiability checking is a fundamental task for $\ltlf$ reasoning. 

Given an $\ltlf$ formula, the satisfiability problem asks whether there is a finite trace that satisfies the formula. A ``classical'' solution to this problem is to reduce it to the LTL satisfiability problem \cite{GV13}. The advantage of this approach is that the LTL satisfiability problem has been studied for at least a decade, and many mature tools are available, cf.~\cite{RV07,RV12}. Thus, $\ltlf$ satisfiability checking can benefit from progress in LTL satisfiability checking. There is, however, an inherent drawback that an extra cost has to be paid when checking LTL formulas, as the tool searches for a ``lasso'' (a lasso consists of a finite path plus a cycle, representing an infinite trace), whereas models of $\ltlf$ formulas are just finite traces. Based on this motivation, \cite{LZPVH14} presented a tableau-style algorithm for $\ltlf$ satisfiability checking. They showed that the dedicated tool, \emph{Aalta-finite}, which conducts an explicit-state search for a satisfying trace, outperforms extant tools for $\ltlf$ satisfiability checking. 
%In particular, \emph{Aalta} outperforms LTL satisfiability checking performed via a reduction to BDD-based symbolic model checking \cite{RV11}, which was until then the best approach to LTL satisfiability checking \cite{RV11}.

The conclusion of a dedicated solver being superior to $\ltlf$ satisfiability checking from~\cite{LZPVH14}, seems to be out of date by now because of the recent dramatic improvement in propositional SAT solving, cf.~\cite{MZ09}. On one hand, SAT-based techniques have led to a significant improvement on LTL satisfiability checking, outperforming the tableau-based techniques of \emph{Aalta-finite} \cite{LZPVH14}. (Also, the SAT-based tool \emph{ltl2sat} for $\ltlf$ satisfiability checking outperforms \emph{Aalta-finite} on particular benchmarks \cite{FG16}.) On the other hand, SAT-based techniques are now dominant in symbolic model checking \cite{CCDGMMMRT14,VWM15}. Our preliminary evaluation indicates that $\ltlf$ satisfiability checking via SAT-based model checking  \cite{Bra11,EMB11} or via SAT-based LTL satisfiability checking \cite{LZPV15} both outperform the tableau-based tool \emph{Aalta-finite}.
Thus, the question raised initially in \cite{RV07} needs to be re-opened with respect to $\ltlf$ satisfiability checking: is it best to reduce to SAT-based model checking or develop a dedicated SAT-based tool?

Inspired by \cite{LZPV15}, we present an explicit-state SAT-based framework for $\ltlf$ satisfiability. We construct the \emph{$\ltlf$ transition system} by utilizing SAT solvers to compute the states explicitly. Furthermore, by making use of both satisfiability and unsatisfiability information from SAT solvers, we propose a \emph{conflict-driven} algorithm, \cdlsc, for efficient $\ltlf$ satisfiability checking. We show that by specializing the transition-system approach of \cite{LZPV15} to $\ltlf$ and its finite-trace semantics, we get a framework that is significantly simpler and yields a much more efficient algorithm \cdlsc than the one in \cite{LZPV15}.

We conduct a comprehensive comparison among different approaches. 
Our experimental results show that the performance of \cdlsc dominates all other existing $\ltlf$-satisfiability-checking algorithms. On average, \cdlsc achieves an approximate four-fold speed-up, compared to the second-best solution (IC3 \cite{Bra11}+K-LIVE \cite{CS12}) tested in our experiments. Our results re-affirm the conclusion of \cite{LZPVH14} that the best approach to $\ltlf$ satisfiability solving is via a dedicated tool, based on explicit-state techniques.

%In the rest of this paper, missing proofs are shown in the supplemental material. 

%The rest of this paper is organized as follows. Section Preliminaries  introduces definitions for $\ltlf$ and its satisfiability problem. Then we describes SAT-based algorithms for $\ltlf$ satisfiability checking. Finally we domenstrates the experimental results and concludes the paper. Due to the page limit, all proofs are omitted.

%\section {Preliminaries}\label{sec:pre}

\section{LTL over Finite Traces}\label{sec:pre}
Given a set $\mathcal{P}$ of atomic propositions, an $\ltlf$ formula
$\phi$ has the form:

%$\phi ::= \tt\ |\ \ff\ |\ p\ |\ \neg \phi\ |\ \phi\vee\phi\ |\ \phi\wedge\phi\|\ X\phi\ |\ N\phi\ |\ \phi U\phi\ |\ \phi R\phi$,
$\ \ \ \ \ \ \ \ \ \ \ \phi ::= \tt\ |\ p\ |\ \neg \phi\ |\ \phi\wedge\phi\ |\ \X\phi\ |\ \phi \U\phi$;

where $\tt$ is true, $\neg$ is the negation operator, $\wedge$ is the and operator, $\X$ is the strong Next operator 
and  $\U$ is the Until operator. We also have the duals $\ff$ (false) for $\tt$, $\vee$ for $\wedge$, 
$\N$ (weak Next) for $\X$ and $\R$ for $\U$. A \emph{literal} is an atom $p\in\mathcal{P}$ or its negation ($\neg p$).
 Moreover, we use the notation $\G\phi$ (Globally) and $\F\phi$
(Eventually) to represent $\ff \R\phi$ and $\tt \U\phi$.
%We have $N\phi \equiv \neg X\neg \phi$ and
%$\phi_1 R\phi_2\equiv \neg(\neg\phi_1 U\neg\phi_2)$. Note that in %$\ltlf$, $X\phi\equiv N\phi$ is not true, which is however the case %in LTL. 
Notably, $\X$ is the standard \emph{next} operator, while $\N$ is \emph{weak next}; 
$\X$ requires the existence of a successor state, while $\N$ does not. 
Thus $\N\phi$ is always true in the last state of a finite trace, since no successor exists there.
This distinction is specific to $\ltlf$.

$\ltlf$ formulas are interpreted over finite traces \cite{GV13}.
Given an atom set $\mathcal{P}$, we define $\Sigma = 2^{\mathcal{P}}$ be  the family of sets of atoms. Let
$\xi\in\Sigma^+$ be a finite nonempty trace, with $\xi=\sigma_0\sigma_1\ldots\sigma_n$. we use
$|\xi|=n+1$ to denote the length of $\xi$. Moreover, for $0\leq
i\leq n$, we denote $\xi[i]$ as the i-th position of $\xi$, and 
$\xi_i$ to represent $\sigma_i\sigma_{i+1}\ldots\sigma_n$, which is
the suffix of $\xi$ from position~$i$. We define the satisfaction relation $\xi\models\phi$ as follows:

\begin{itemize}[noitemsep,topsep=0pt]
  \item $\xi\models\tt$; and $\xi\models p$, if $p\in\mathcal{P}$ and $p\in\xi[0]$;
  \item $\xi\models\neg\phi$, if $\xi\not\models\phi$;
  \item $\xi\models\phi_1\wedge\phi_2$,  if $\xi\models\phi_1$ and $\xi\models\phi_2$;
  %\item If $\phi=\phi_1\vee\phi_2$, $\xi\models\phi$ if $\xi\models\phi_1$ or $\xi\models\phi_2$.
  \item $\xi\models \X\phi$ if $|\xi|>1$ and $\xi_1\models\psi$;
  %\item If $\phi=N\psi$, $\xi\models\phi$ if $|\xi|>1$ implies $\xi_1\models\psi$;
  \item $\xi\models (\phi_1 \U\phi_2)$, if there exists $0\leq i < |\xi|$
  such that $\xi_i\models\phi_2$ and for every $0\leq j < i$ it holds that $\xi_j\models\phi_1$;
 % \item If $\phi=(\phi_1 R\phi_2)$, $\xi\models\phi$ if for every $0\leq i <|\xi|$ either $\xi_i\models\phi_2$ or for
  %some $0\leq j < i$ it holds that $\xi_j\models\phi_1$.
  \end{itemize}

\begin{definition}[$\ltlf$ Satisfiability Problem]
Given an $\ltlf$ formula $\phi$ over the alphabet $\Sigma$,
we say $\phi$ is satisfiable iff there is a finite nonempty trace
$\xi\in\Sigma^+$ such that $\xi\models\phi$.
\end{definition}

\noindent\textbf{Notations.}
%The \textit{closure} of an $\ltlf$ formula $\phi$, denoted as $cl(\phi)$,  is a formula set such that: (1) $\phi$ is in $cl(\phi)$;  (2) $\psi$ is in $cl(\phi)$ if $\phi=X\psi/N\psi$ or $\phi=\neg\psi$; (3) $\phi_1,\phi_2$ are in $cl(\phi)$ if $\phi=\phi_1\  o\ \phi_2$, where $o$ can be $\wedge,\vee,U$ and $R$. We say each $\psi$ in $cl(\phi)$, is a \textit{subformula}  of $\phi$.  
We use $cl(\phi)$ to denote the set of subformulas of $\phi$. 
Let $A$ be a set of $\ltlf$ formulas, we denote $\bigwedge A$ to be the formula $\bigwedge_{\psi\in A}\psi$.
The two $\ltlf$ formulas $\phi_1,\phi_2$ are semantically equivalent, denoted as $\phi_1\equiv\phi_2 $, iff for every finite trace $\xi$, $\xi\models\phi_1$ iff $\xi\models\phi_2$.  Obviously, we have $(\phi_1\vee\phi_2)\equiv \neg (\neg\phi_1 \wedge\neg \phi_2)$, 
$\N\psi\equiv \neg \X\neg\psi$ and $(\phi_1 \R\phi_2)\equiv \neg (\neg\phi_1 \U\neg \phi_2)$.

We say an $\ltlf$ formula $\phi$ is in \emph{Tail Normal Form} (TNF) if $\phi$ is in \emph{Negated Normal Form} (NNF) and $\N$-free. 
It is trivial to know that every $\ltlf$ formula 
has an equivalent NNF. Assume $\phi$ is in NNF, $\tnf{\phi}$ is defined as $t(\phi)\wedge \F Tail$, where $Tail$ 
is a new atom to identify the last state of satisfying traces (Motivated from \cite{GV13}), and $t(\phi)$ is an $\ltlf$ formula defined recursively as follows: (1) $t(\phi) = \phi$ if $\phi$ is $\tt,\ff$ or a literal; 
(2) $t(\X\psi) = \neg Tail \wedge \X (t(\psi))$; (3) $t(\N\psi) = Tail\vee \X(t(\psi))$; (4) $t(\phi_1\wedge\phi_2) = t(\phi_1)\wedge t(\phi_2)$; (5) $t(\phi_1\vee\phi_2) = t(\phi_1)\vee t(\phi_2)$; (6) $t(\phi_1 \U\phi_2) = (\neg Tail\wedge t(\phi_1)) \U t(\phi_2)$; (7) $t(\phi_1 \R\phi_2) = (Tail\vee t(\phi_1)) \R t(\phi_2)$. 

%From the construction above, the size of $\tnf{\phi}$ is linear to that of $\phi$, and the following theorem shows $\phi$ and $\tnf{\phi}$ 
%are equi-satisfiable.

\begin{theorem}\label{thm:tnf}
    $\phi$ is satisfiable iff $\tnf{\phi}$ is satisfiable.
\end{theorem}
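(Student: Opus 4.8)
The plan is to derive both directions of the equivalence from a single correspondence lemma that ties a trace over the extended alphabet $2^{\mathcal{P}\cup\{Tail\}}$ to the trace over $2^{\mathcal{P}}$ obtained by truncating at the first state carrying $Tail$. For such a trace $\eta$ write $\bar\eta$ for its projection to $\mathcal{P}$ (deleting $Tail$ from every position) and $\eta[0..k]=\eta[0]\cdots\eta[k]$ for the prefix ending at position $k$. The lemma I would establish is: for every NNF formula $\psi$, every finite $\eta$, and every $k$ with $0\le k<|\eta|$ such that $Tail\in\eta[k]$ and $Tail\notin\eta[j]$ for all $j<k$ (so $k$ is the first occurrence of $Tail$), it holds that $\eta\models t(\psi)$ iff $\overline{\eta[0..k]}\models\psi$. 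This single statement folds together the two jobs $t$ must do: translating the weak/strong-next distinction into the $Tail$ marker, and making the translated formula insensitive to whatever $\eta$ does beyond position $k$.

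I would prove the lemma by structural induction on $\psi$. The literal and Boolean cases are immediate, since $t$ is the identity on literals and homomorphic on $\wedge,\vee$, and $Tail\notin\mathcal{P}$. The temporal cases are the crux, and this is exactly where the $Tail$-conjuncts earn their keep. For $t(\X\psi)=\neg Tail\wedge\X t(\psi)$: if $k=0$ both sides are false (the left because $\neg Tail$ fails at position $0$, the right because the one-state prefix has no successor), and if $k\ge1$ then $\neg Tail$ holds at position $0$ and I pass to the suffix $\eta_1$, whose first $Tail$ sits at $k-1$, invoking the induction hypothesis. The dual case $t(\N\psi)=Tail\vee\X t(\psi)$ is symmetric, the $Tail$ disjunct discharging the weak next precisely at $k=0$, the last state of the truncation. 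For $t(\phi_1\U\phi_2)=(\neg Tail\wedge t(\phi_1))\U t(\phi_2)$, the conjunct $\neg Tail$ forces any witnessing position $i$ to satisfy $i\le k$ (otherwise $m=k<i$ would have to satisfy $\neg Tail$ while carrying $Tail$), so the until must be fulfilled inside the truncation; applying the hypothesis to $\phi_1,\phi_2$ on the relevant suffixes then reproduces exactly the until-semantics on $\overline{\eta[0..k]}$. Dually, for $t(\phi_1\R\phi_2)=(Tail\vee t(\phi_1))\R t(\phi_2)$ the disjunct $Tail$ makes the release obligation vacuously satisfied at every position beyond $k$ (witnessed by $m=k$), so that on $\eta$ the release constraint collapses to the release constraint on the truncated trace.

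Given the lemma, the theorem follows quickly. For the forward direction, from $\xi=\sigma_0\cdots\sigma_n\models\phi$ I build $\eta$ by adding $Tail$ to the last state only, so that $k=n$ is the first (and only) $Tail$-position and $\overline{\eta[0..n]}=\xi$; the lemma gives $\eta\models t(\phi)$, while $\eta\models\F Tail$ is clear, hence $\eta\models\tnf{\phi}$. For the backward direction, from $\eta\models\tnf{\phi}$ the conjunct $\F Tail$ guarantees a least position $k$ with $Tail\in\eta[k]$; the lemma applied to this $k$ turns $\eta\models t(\phi)$ into $\overline{\eta[0..k]}\models\phi$, a nonempty finite model of $\phi$. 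If $\phi$ is not already in NNF I first replace it by an equivalent NNF formula, which is sound since NNF preserves $\ltlf$-semantics. The main obstacle is the temporal induction: one must set up the hypothesis so that the first-$Tail$ index tracks correctly when moving to suffixes ($k\mapsto k-1$), and verify that the auxiliary $\neg Tail$ and $Tail$ conjuncts do exactly two things — confine until-witnesses to within the truncation and trivialize release beyond it — neither more nor less. Getting these boundary interactions at position $k$ right for $\U$ and $\R$ is the only delicate part; everything else is routine.
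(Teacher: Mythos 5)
Your proof is correct, and while it rests on the same core idea as the paper's---truncate at the first occurrence of $Tail$ and induct structurally over the translation $t$---it uses a genuinely different decomposition. The paper runs two separate inductions: Lemma~\ref{lem:tnf2} proves the biconditional only for \emph{canonical} pairs $(\xi,\xi')$, where $\xi'$ is $\xi$ with $Tail$ added exactly at the last position, and Lemma~\ref{lem:tnf1} proves, by a second induction, that an arbitrary model of $\tnf{\phi}$ may be truncated at its first $Tail$-position while preserving $t(\phi)$, thereby normalizing it into canonical form before Lemma~\ref{lem:tnf2} applies. Your single lemma---$\eta\models t(\psi)$ iff the $Tail$-projection of $\eta[0..k]$ satisfies $\psi$, for $k$ the first $Tail$-position---folds both jobs into one induction over arbitrary traces, and the delicate boundary cases all check out: at $k=0$ both sides of the $\X$-case fail and both sides of the $\N$-case hold; the $\neg Tail$ conjunct in the $\U$-case confines any witness to $i\le k$ (position $k$ would otherwise have to satisfy $\neg Tail$ while carrying $Tail$); the $Tail$ disjunct in the $\R$-case discharges all obligations at positions beyond $k$ with witness $j=k$; and the first-$Tail$ index shifts correctly ($k\mapsto k-1$) when passing to suffixes. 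What each route buys: yours is more economical---one induction instead of two, with the backward direction of the theorem needing only the $\F Tail$ conjunct to supply $k$ rather than a separate normalization lemma---and it is stated in greater generality, since nothing is assumed about $\eta$ past position $k$; the paper's split isolates the exact model correspondence of Lemma~\ref{lem:tnf2} as a standalone, reusable statement (it pins down the satisfying trace itself, not merely satisfiability). As a side remark, your insistence on NNF with literals as the only negated base case is tidier than the paper's Lemma~\ref{lem:tnf2}, whose induction includes a $\neg\psi$ case even though $t$ is defined only on NNF formulas.
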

\iffalse
\begin{proof}
(Sketch.) Let $\xi,\xi'$ be two non-empty finite traces satisfying $|\xi| = |\xi'|$ and $\xi'[i] = \xi[i]$ for $0\leq i< |\xi|-1$ as well as $\xi'[|\xi|-1] = \xi[|\xi|-1]\cup\{Tail\}$. We can prove by induction over the type of $\phi$ that $\xi\models\phi$ iff $\xi'\models\tnf{\phi}$. For the ($\Leftarrow$) direction, we complement the proof that $\tnf{\phi}$ is satisfiable implies there is such a $\xi'$ that $\xi'\models\tnf{\phi}$. 
\end{proof}
\fi

%Based on this theorem, it allows us to consider the input $\ltlf$ formula of our checking algorithm to be TNF. 
In the rest of the paper, unless clearly specified, the input $\ltlf$ formula is in TNF.

\section{Approach Overview}
There is a Non-deterministic 
Finite Automaton (NFA) $\mathcal{A}_{\phi}$ that accepts exactly the same language as an $\ltlf$ formula $\phi$ \cite{GV13}.
Instead of constructing the NFA for $\phi$, we generate  
the corresponding \emph{transition system} (Definition \ref{def:ts}), by leveraging SAT solvers.  
The transition system can be considered as an intermediate structure of the NFA, in which every state consists of 
a set of subformulas of $\phi$. 

The classic approach to generate the NFA from an $\ltlf$ formula, i.e. Tableau Construction \cite{GPVW95}, creates the set of 
all one-transition next states of the current state. However, the number of these states is extremely 
large. To mitigate the overload, we leverage SAT solvers to compute the next states of the current state iteratively. 
Although both approaches share the same 
worst case (computing all states in the state space), our new approach is better for on-the-fly checking, as it 
computes new states only if the satisfiability of the formula cannot be determined based on existing states. 

We show the SAT-based approach via an example. Consider the formula 
$\phi=(\neg Tail \wedge a) \U b$. The initial state $s_0$ of the transition system 
is $\{\phi\}$. To compute the next states of $s_0$, we translate $\phi$ to its equivalent \emph{neXt Normal Form} (XNF),  
e.g. $\xnf{\phi} = (b\vee ((\neg Tail\wedge a)\wedge \X\phi))$, see Definition \ref{def:xnf}. If we replace $\X\phi$ in $\xnf{\phi}$ 
with a new propositions $p_1$, the new formula, denoted $\xnf{\phi}^p$, is a pure Boolean formula. As a result, 
a SAT solver can compute an assignment for the formula $\xnf{\phi}^p$. Assume the assignment is 
$\{a, \neg b, \neg Tail, p_1\}$, then we can induce that 
$(a\wedge \neg b\wedge \neg Tail\wedge \X\phi)\Rightarrow \phi$ is true, which  
indicates $\{\phi\}=s_0$ is a one-transition next state of $s_0$, i.e. $s_0$ has a self-loop with the label $\{a,\neg b, \neg Tail\}$. 
To compute another next state of $s_0$, we add the constraint 
$\neg p_1$ to the input of the SAT solver. Repeat the above process and we can construct all states 
in the transition system. 

Checking the satisfiability of $\phi$ is then reduced to finding a \emph{final state} (Definition \ref{def:final}) 
in the corresponding transition system. 
Since $\phi$ is in TNF, a final state $s$ meets the constraint that 
$Tail\wedge \xnf{\bigwedge s}^p$ (recall $s$ is a set of subformulas of $\phi$) 
is satisfiable. For the above example, the initial state $s_0$ is actually a final state, as $Tail\wedge \xnf{\phi}^p$ is satisfiable. Because all states computed by the SAT solver in the transition system are reachable from the initial 
state, we can prove that $\phi$ is satisfiable iff there is a final state in the system (Theorem \ref{thm:reasoning}). 

We present a conflict-driven algorithm, i.e. \cdlsc, to accelerate the satisfiability checking. 
\cdlsc maintains a \emph{conflict sequence} $\mathcal{C}$, in which each element, denoted as $\mathcal{C}[i]$ 
($0\leq i <|\mathcal{C}|$), is a set of states in the transition system that cannot reach a final state in $i$ steps. 
Starting from the initial state, \cdlsc iteratively checks whether a final state can be reached, and makes use of the 
conflict sequence to accelerate the search. Consider the formula 
$\phi = (\neg Tail) \U a\wedge (\neg Tail) \U (\neg a)\wedge (\neg Tail) \U b\wedge (\neg Tail) \U (\neg b) \wedge (\neg Tail) \U c$. 
In the first iteration, \cdlsc checks whether the initial state $s_0=\{\phi\}$ is a final state, i.e. whether 
$Tail\wedge \xnf{\phi}^p$ is satisfiable. The answer is negative, so $s_0$ cannot reach a final state in 0 steps and  
can be added into $\mathcal{C}[0]$. However, we can do 
better by leveraging the Unsatisfiable Core (UC) returned from the SAT solver. 
Assume that we get the UC $u_1=\{(\neg Tail) \U a, (\neg Tail) \U (\neg a)\}$. That indicates every state $s$ containing $u$, 
i.e. $s\supseteq u$, is not a final state. As a result, we can add $u$ instead of $s_0$ into $\mathcal{C}[0]$, making the algorithm much 
more efficient. 

Now in the second iteration, \cdlsc first tries to compute a one-transition next state of $s_0$ that is not included in $\mathcal{C}[0]$. 
(Otherwise the new state 
cannot reach a final state in 0 step.) This can be encoded as a Boolean formula $\xnf{\phi}^p\wedge\neg (p_1\wedge p_2)$ where $p_1, p_2$ represent $\X((\neg Tail) \U a)$ and $\X((\neg Tail) \U (\neg a))$ respectively. Assume the new state $s_1=\{(\neg Tail) \U a, (\neg Tail) \U b, (\neg Tail) \U (\neg b), (\neg Tail) \U c\}$ is generated from the assignment of 
the SAT solver. Then \cdlsc checks whether $s_1$ can reach a final state in 0 step, i.e. $\xnf{\bigwedge s_1}^p\wedge Tail$ is satisfiable. 
The answer is negative and we can add the UC $u_2=\{(\neg Tail) \U b, (\neg Tail) \U (\neg b)\}$ to $\mathcal{C}[0]$ as well. 
Now to compute a next state of $s_0$ that is not included in $\mathcal{C}[0]$, the encoded Boolean formula becomes 
$\xnf{\phi}^p\wedge\neg (p_1\wedge p_2)\wedge \neg (p_3\wedge p_4)$ where $p_3$, $p_4$ represent $\X((\neg Tail) \U b)$ 
and $\X((\neg Tail) \U (\neg b))$ respectively. Assume the new state $s_2=\{(\neg Tail) \U a, (\neg Tail) \U b, (\neg Tail) \U c\}$ is generated from the assignment of the SAT solver. Since $\xnf{\bigwedge s_2}^p\wedge Tail$ is satisfiable, $s_2$ is a final state and we conclude that the formula $\phi$ is satisfiable. In principle, there are a total of $2^5=32$ states in the transition system of $\phi$, 
but \cdlsc succeeds to find the answer by computing only 3 of them (including the initial state).  

\cdlsc also leverages the conflict sequence to accelerate checking unsatisfiable formulas. 
Similar to Bounded Model Checking (BMC) \cite{BCCZ99}, \cdlsc searches the model iteratively. However, BMC invokes only 1 SAT call 
for each iteration, while \cdlsc invokes multiple SAT calls. \cdlsc is more like an IC3-style algorithm, but achieves a 
much simpler implementation by using UC instead of the \emph{Minimal Inductive Core} (MIC)  like IC3 \cite{Bra11}.

\section {SAT-based Explicit-State Checking}\label{sec:explicit}
%We present a SAT-based explicit-state framework for $\ltlf$ satisfiability checking. %The general idea is as follows: 
Given an $\ltlf$ formula $\phi$, we construct the \emph{$\ltlf$ transition system} \cite{LZPVH14,LZPV15} by SAT solvers and then check the satisfiability of the formula over its corresponding transition system. 
%Based on this framework, we present a conflict-driven algorithm that leverages both the satisfiable and unsatisfiable information from the SAT solver. We first define the $\ltlf$ transition system.  

\subsection{$\ltlf$ Transition System}

First, we show how one can consider $\ltlf$ formulas as propositional ones. This requires considering temporal subformulas as \textit{propositional atoms}.  
%We now define the \textit{propositional atoms} of $\ltlf$ formulas. 
 
\begin{definition}[Propositional Atoms]\label{def:propatoms}
For an $\ltlf$ formula $\phi$, we define the set of \textit{propositional atoms} of $\phi$, 
i.e. $\p{\phi}$, as follows: (1) $\p{\phi}=\{\phi\}$ if $\phi$ is an atom, Next, Until or Release formula; (2) $\p{\phi} = \p{\psi}$ if $\phi=(\neg\psi)$; 
(3) $\p{\phi} = \p{\phi_1}\cup\p{\phi_2}$ if $\phi=(\phi_1\wedge\phi_2)$ or $(\phi_1\vee\phi_2)$.
\iffalse
  \begin{enumerate}[noitemsep,topsep=0pt]
    \item $\p{\phi}=\{\phi\}$ if $\phi$ is an atom, Next, Until or Release formula;
    \item $\p{\phi} = \p{\psi}$ if $\phi=(\neg\psi)$;
    \item $\p{\phi} = \p{\phi_1}\cup\p{\phi_2}$ if $\phi=(\phi_1\wedge\phi_2)$ or $\phi=(\phi_1\vee\phi_2)$.
  \end{enumerate}
\fi
\end{definition}

Consider $\phi= (a\wedge ((\neg Tail\wedge a) \U b)\wedge \neg (\neg Tail\wedge \X (a\vee b)))$. We have 
$\p{\phi}=\{a, Tail, ((\neg Tail\wedge a) \U b), (\X(a\vee b))\}$. Intuitively, the propositional atoms are obtained by treating all temporal subformulas of $\phi$ as atomic propositions. Thus,
an $\ltlf$ formula $\phi$ can be viewed as a propositional formula over $\p{\phi}$. 

\begin{definition}\label{def:pa}
For an $\ltlf$ formula $\phi$, let $\phi^p$ be $\phi$ considered as a propositional formula over $\p{\phi}$. A \emph{propositional 
assignment} $A$ of $\phi^p$, is in $2^{\p{\phi}}$ and satisfies $A\models\phi^p$. 
\end{definition}

Consider the formula $\phi = (a\vee (\neg Tail\wedge a) \U b) \wedge (b\vee (Tail\vee c) \R d)$. From Definition \ref{def:pa}, 
$\phi^p$ is 
$(a\vee p_1)\wedge (b\vee p_2)$ where $p_1$, $p_2$ are two Boolean variables representing the truth values of 
$(\neg Tail \wedge a) \U b$ and $(Tail \vee c) \R d$. 
Moreover, the set $\{\neg a, p_1 ((\neg Tail\wedge a) \U b), \neg b, p_2 ((Tail\vee c) \R d)\}$ is a propositional assignment 
of $\phi^p$. In the rest of 
the paper, we do not introduce the intermediate variables and directly say 
$\{\neg a, (\neg Tail \wedge a) \U b, \neg b, (Tail \vee c) \R d\}$ is a 
propositional assignment of $\phi^p$. The following theorem shows the relationship between the propositional assignment of 
$\phi^p$ and the satisfaction of $\phi$. 

\begin{theorem}\label{thm:assign}
For an $\ltlf$ formula $\phi$ and a finite trace $\xi$, $\xi\models\phi$ implies there exists a propositional assignment $A$ of $\phi^p$ such that $\xi\models\bigwedge A$; On the other hand, $\xi\models\bigwedge A$ where $A$ is a propositional assignment of $\phi^p$, also implies $\xi\models\phi$.
\end{theorem}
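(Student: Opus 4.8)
The plan is to prove both directions by structural induction on the shape of the $\ltlf$ formula $\phi$, following exactly the recursive clauses in Definition~\ref{def:propatoms}. The key observation is that $\p{\phi}$ treats every atom, Next, Until, and Release subformula as an opaque propositional variable, so a propositional assignment $A\in 2^{\p{\phi}}$ is just a choice of truth value for each such variable, and $\bigwedge A$ is the $\ltlf$ formula obtained by conjoining the chosen literals (a temporal subformula if it is set true, its negation if set false). I would set up the induction so that both implications are handled simultaneously at each case, since the two directions share the same case analysis.

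For the base cases, when $\phi$ is $\tt$, $\ff$, a literal, or a temporal formula (Next/Until/Release), we have $\p{\phi}=\{\phi\}$ (or $\emptyset$ for $\tt,\ff$), and the only propositional assignments are essentially $\phi$ itself, so $\bigwedge A$ is logically $\phi$ and both directions are immediate. The interesting cases are the Boolean connectives. For $\phi=\phi_1\wedge\phi_2$, clause~(3) gives $\p{\phi}=\p{\phi_1}\cup\p{\phi_2}$. For the forward direction, if $\xi\models\phi$ then $\xi\models\phi_1$ and $\xi\models\phi_2$; applying the induction hypothesis to each yields assignments $A_1,A_2$ of $\phi_1^p,\phi_2^p$ with $\xi\models\bigwedge A_i$, and I would take $A=A_1\cup A_2$, checking that it is well-defined (consistent) as a single assignment over the shared atom set and that it satisfies $\phi^p$. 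For the backward direction, a satisfying $A$ of $\phi^p$ restricts to satisfying assignments of $\phi_1^p$ and $\phi_2^p$, and $\xi\models\bigwedge A$ gives $\xi\models\bigwedge A_i$ by monotonicity of conjunction, so the induction hypothesis yields $\xi\models\phi_1,\phi_2$ hence $\xi\models\phi$. The disjunctive case $\phi=\phi_1\vee\phi_2$ is dual: $\xi\models\phi$ picks out one disjunct, and the corresponding $A_i$ is extended to a full assignment over $\p{\phi}$; the negation case $\phi=\neg\psi$ uses $\p{\phi}=\p{\psi}$ together with the fact that $\phi^p=\neg(\psi^p)$.

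The main obstacle I anticipate is the bookkeeping around \emph{consistency and completeness} of the combined assignment in the $\wedge$ and $\vee$ cases: I must ensure $A_1$ and $A_2$ agree on atoms in $\p{\phi_1}\cap\p{\phi_2}$ (so their union is a genuine assignment), and in the $\vee$ case I must extend the chosen $A_i$ to assign every atom in the other disjunct's set without disturbing $\xi\models\bigwedge A$. The clean way to handle this is to read off the assignment \emph{from the trace itself}: given $\xi$, define the canonical assignment $A_\xi$ by putting a propositional atom $\psi\in\p{\phi}$ into $A_\xi$ exactly when $\xi\models\psi$ (and its negation otherwise). Then $\xi\models\bigwedge A_\xi$ holds by construction, consistency is automatic because truth in $\xi$ is a function, and the only remaining task is the purely propositional fact that $A_\xi\models\phi^p$ iff $\xi\models\phi$, which follows by a routine induction since the Boolean connectives of $\phi$ mirror those of $\phi^p$ over the atoms $\p{\phi}$. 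This reformulation dissolves the combinatorial bookkeeping and reduces the theorem to a single structural induction on the Boolean structure above the temporal atoms.
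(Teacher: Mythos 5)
Your proposal is correct, and its first half is essentially the paper's own proof: the paper proves the forward direction by exactly your structural induction, taking $A = A_1\cup A_2$ in the $\wedge$ case and arguing consistency from the trace (if $\psi\in A_1$ and $\neg\psi\in A_2$ then $\xi$ would have to satisfy both $\psi$ and $\neg\psi$), with the $\vee$ case handled similarly. Where you diverge is instructive on both sides. For the backward direction the paper does not induct at all: it observes in one line that $A\models\phi^p$ implies $(\bigwedge A)\Rightarrow\phi$ (satisfying assignments of the propositional abstraction yield semantic entailment under the substitution of temporal formulas back for their atoms), so $\xi\models\bigwedge A$ immediately gives $\xi\models\phi$; this is terser than your case analysis and sidesteps the negation worry entirely (note also that the paper works with formulas in TNF, hence NNF, so negations sit only on atoms and your $\neg\psi$ case is largely moot there). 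Your canonical-assignment device $A_\xi$ is a genuinely cleaner route than either: proving the biconditional $A_\xi\models\phi^p$ iff $\xi\models\phi$ by one induction dissolves the consistency check in the $\wedge$ case, the extension bookkeeping in the $\vee$ case, and the negation flip all at once. The only point you should make explicit for the backward direction under that reformulation is the bridge from an \emph{arbitrary} satisfying $A$ to $A_\xi$: since $\xi\models\bigwedge A$, every literal of $A$ agrees with $A_\xi$, so $A_\xi$ extends $A$ and $A\models\phi^p$ yields $A_\xi\models\phi^p$ by monotonicity of (partial) assignment satisfaction, whence $\xi\models\phi$ by your lemma. That is a one-line addition, not a gap in the idea.
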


We now introduce the \textit{neXt Normal Form} (XNF) of $\ltlf$ formulas, which is useful for the construction of the transition system. 
\begin{definition}[neXt Normal Form]\label{def:xnf}
An $\ltlf$ formula $\phi$ is in \emph{neXt Normal Form} (XNF) 
if there are no 
Until or Release subformulas of $\phi$ in $\p{\phi}$. 
\end{definition}
For example, $\phi= ((\neg Tail \wedge a) \U b)$ is not in XNF,  while 
$(b\vee (\neg Tail\wedge a\wedge (\X((\neg Tail\wedge a) \U b))))$ is.
Every $\ltlf$ formula $\phi$ has a linear-time conversion to an equivalent formula in XNF, 
which we denoted as $\xnf{\phi}$.
\begin{theorem}\label{thm:xnf}
For an $\ltlf$ formula $\phi$, there is a corresponding $\ltlf$ formula $\xnf{\phi}$ in XNF such that 
$\phi\equiv\xnf{\phi}$.
Furthermore, the cost of the conversion is linear.
\end{theorem}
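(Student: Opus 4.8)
The plan is to prove both claims by structural induction on $\phi$, simultaneously defining the map $\xnf{\cdot}$ and verifying $\phi\equiv\xnf{\phi}$. The recursive definition I would use is: $\xnf{\phi}=\phi$ when $\phi$ is $\tt$, $\ff$, a literal, or a Next-formula $\X\psi$ (each is already in XNF, since its propositional atoms contain no $\U$/$\R$); $\xnf{\phi_1\wedge\phi_2}=\xnf{\phi_1}\wedge\xnf{\phi_2}$ and dually for $\vee$; and for the temporal binary operators the one-step unfolding laws $\phi_1\U\phi_2\equiv\phi_2\vee(\phi_1\wedge\X(\phi_1\U\phi_2))$ and $\phi_1\R\phi_2\equiv\phi_2\wedge(Tail\vee\phi_1\vee\X(\phi_1\R\phi_2))$, which yield $\xnf{\phi_1\U\phi_2}=\xnf{\phi_2}\vee(\xnf{\phi_1}\wedge\X(\phi_1\U\phi_2))$ and $\xnf{\phi_1\R\phi_2}=\xnf{\phi_2}\wedge(Tail\vee\xnf{\phi_1}\vee\X(\phi_1\R\phi_2))$. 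The crucial point is that the unfolded copy $\phi_1\U\phi_2$ (resp.\ $\phi_1\R\phi_2$) sits under a fresh $\X$ and is therefore treated as a single propositional atom rather than recursively expanded; this is exactly what keeps the result in XNF (its $\p{\cdot}$ contains only literals, already-XNF atoms from the operands, and guarded Next-atoms) while terminating the recursion.

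For correctness I would establish the two unfolding laws directly from the finite-trace semantics and then lift them through the Boolean cases using congruence of $\equiv$ under $\wedge,\vee$ together with the induction hypothesis $\phi_i\equiv\xnf{\phi_i}$. The base and Boolean cases are immediate, and the $\X\psi$ case is trivial since $\xnf{\X\psi}=\X\psi$. For $\U$, the semantic unfolding is checked by splitting on the witness index $i$: $i=0$ yields the left disjunct, while $i>0$ forces a successor (so $|\xi|>1$, validating the strong $\X$) and shifts the witness to $\xi_1$. The $\R$ case is then obtained by duality from the $\U$ law via $\phi_1\R\phi_2\equiv\neg(\neg\phi_1\U\neg\phi_2)$ and $\N\chi\equiv\neg\X\neg\chi$.

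The hardest part is the last-position behaviour of $\R$ under finite-trace, $\N$-free (TNF) semantics. At the final state $\phi_1\R\phi_2$ must collapse to $\phi_2$ alone, since Release is vacuously satisfied once the trace ends, whereas a naive strong-next unfolding $\phi_2\wedge(\phi_1\vee\X(\phi_1\R\phi_2))$ would instead demand $\phi_1\wedge\phi_2$ there, because $\X(\cdots)$ is false at the last position. The remedy is the disjunct $Tail$: because the TNF transformation marks exactly the last state with $Tail$ and no other, one has $Tail\vee\X(\phi_1\R\phi_2)\equiv\N(\phi_1\R\phi_2)$ on all traces in the intended semantics, so the unfolding stays in NNF and $\N$-free. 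I would therefore prove the $\R$ law relative to the TNF convention already in force (equivalently, over the extended alphabet in which $Tail$ identifies the final state), which is precisely the regime in which the subsequent final-state test $Tail\wedge\xnf{\bigwedge s}^p$ is applied.

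Finally, for the linear bound I would argue that the recursion descends only through Boolean connectives and into the operands of $\U$/$\R$, and never under a Next; each $\U$/$\R$ node emits a constant-size gadget whose only new ingredient is the guarded atom $\X(\phi_1\U\phi_2)$ (resp.\ $\X(\phi_1\R\phi_2)$). Hence the set of distinct subformulas of $\xnf{\phi}$ consists of the subformulas of $\phi$ together with one guarded atom per temporal subformula, i.e.\ $O(|\phi|)$ nodes, computable in $O(|\phi|)$ time. I would note explicitly that this linearity is with respect to the shared/DAG representation (equivalently, the $\phi^p$-style treatment of temporal subformulas as atoms) already adopted in the paper: writing $\xnf{\phi}$ as an unshared string can be quadratic on nested formulas such as $a_1\U(a_2\U(\cdots))$, so it is precisely the atomic treatment of the guarded copies that makes the conversion genuinely linear.
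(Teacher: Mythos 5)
Your construction coincides with the paper's except in the Release rule, and the overall route (recursive definition via one-step expansion of $\U$ and $\R$, stopping the recursion under $\X$) is the same. The paper defines $\xnf{\phi_1 \R \phi_2} = \xnf{\phi_2}\wedge(\xnf{\phi_1}\vee \X(\phi_1 \R\phi_2))$ with \emph{no} $Tail$ disjunct, and its entire justification is a two-line sketch asserting that the expansion rules preserve equivalence and that the cost is linear. Your extra $Tail$ disjunct is exactly the last-position correction that the paper instead inherits from its TNF preprocessing: since the input is assumed in TNF, every Release subformula already has the shape $(Tail\vee t(\phi_1))\R t(\phi_2)$, so the paper's $Tail$-free rule produces the very guarded disjunction you write explicitly; your variant simply makes the construction sound for arbitrary NNF, $\N$-free formulas (relative to $Tail$-marked traces) rather than only for TNF inputs. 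You are also more careful than the paper on two points it glosses over. First, the strong-$\X$ unfolding of $\R$ is \emph{not} a valid finite-trace equivalence in general --- on a length-one trace whose last position lacks $Tail$, $(Tail\vee a)\R b$ holds while its unfolding fails --- so the stated equivalence $\phi\equiv\xnf{\phi}$ must be read relative to traces in which $Tail$ marks exactly the final position, which is precisely the regime in which the algorithm applies the final-state test $Tail\wedge\xnf{\bigwedge s}^p$; your proposal states this restriction explicitly, and your identification $Tail\vee\X(\phi_1\R\phi_2)\equiv\N(\phi_1\R\phi_2)$ in that regime, together with the duality derivation of the $\R$ law from the $\U$ law, is a sound way to discharge it. Second, your caveat on linearity is correct: the bound is linear in the shared/DAG representation where $\X(\phi_1\U\phi_2)$ is an atom pointing to an existing subformula (the paper's $\phi^p$ viewpoint), whereas the flat string representation can grow quadratically on nested Until/Release; the paper's bare claim ``the cost is at most linear'' silently assumes this sharing. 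In short: same construction and same essential idea, but your writeup supplies the semantic argument and the two qualifications that the paper's proof sketch omits.
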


Observe that when $\phi$ is in XNF, there can be only Next (no Until or Release) temporal formulas in the propositional assignment 
of $\phi^p$. For $\phi = b \vee (a\wedge \neg Tail\wedge \X(a \U b))$, the set $A=\{a, \neg b, \neg Tail, \X(a \U b)\}$ 
is a propositional 
assignment of $\phi^p$. 
Based on $\ltlf$ semantics, we can induce from $A$ that if a finite trace $\xi$ satisfying $\xi[0]\supseteq \{a, \neg b, \neg Tail\}$ 
and $\xi_1\models a \U b$, $\xi\models \phi$ is true. This motivates us to construct the transition system for $\phi$, 
in which $\{a \U b\}$ is a next state 
of $\{\phi\}$ and $\{a, \neg b, \neg Tail\}$ is the transition label between these two states. 

Let $\phi$ be an $\ltlf$ formula and $A$ be a propositional assignment of $\phi^p$, we denote $L(A) =\{l | l\in A \textit{ is a literal}\}$ and  
$X(A) =\{\theta | \X\theta \in A\}$.  Now we define the 
\emph{transition system} for an $\ltlf$ formula.

\begin{definition}\label{def:ts}
Given an $\ltlf$ formula $\phi$ and its literal set $\mathcal{L}$, let $\Sigma = 2^{\mathcal{L}}$. We define the \emph{transition system} $T_{\phi}=(S,s_0,T)$ for $\phi$, where $S\subseteq 2^{cl(\phi)}$  is the set of states, $s_0 = \{\phi\}\in S$ is the \emph{initial state}, and
\begin{itemize}[noitemsep,topsep=0pt]
%\item $S\subseteq 2^{cl(\phi)}$  is the set of states;  
%\item $s_0 = \{\phi\}$ is the \emph{initial state}.
\item $T: S\times \Sigma \rightarrow 2^S$ is the transition relation, where $s_2\in T(s_1, \sigma)$ ($\sigma\in \Sigma$) holds iff there is a propositional assignment 
$A$ of $\xnf{\bigwedge s_1}^p$ such that $\sigma \supseteq L(A)$ and $s_2 = X(A)$. 
\end{itemize}
A \emph{run} of $T_\phi$ on a finite trace $\xi (|\xi|=n>0)$ is a finite sequence $s_0,s_1,\ldots, s_n$ such that $s_0$ is the initial state and $s_{i+1}\in T(s_i,\xi[i])$ holds for all $0\leq i < n$.
\end{definition}

We define the notation $|r|$ for a run $r$, to represent the length of $r$, i.e. number of states in $r$. We say state $s_2$ is reachable from state $s_1$ in $i(i\geq 0)$ steps (resp. in up to $i$ steps), if there is a run $r$ on some finite trace $\xi$ leading from $s_1$ to $s_2$ and $|r| = i$ (resp. $|r|\leq i$). In particular, we say $s_2$ is a \emph{one-transition next state} of $s_1$ if $s_2$ is reachable from $s_1$ in 1 steps. Since a state $s$ is a subset of $cl(\phi)$, which essentially is a formula with the form of $\bigwedge_{\psi\in s}\psi$, we mix the usage of the state and formula in the rest of the paper. That is, a state can be a formula of $\bigwedge_{\psi\in s}\psi$, and a formula $\phi$ can be a set of states, i.e. $s\in\phi$ iff $s\Rightarrow \phi$.

\begin{lemma}\label{lem:reachable}
Let $T_{\phi}=(S,s_0,T)$ be the transition system of $\phi$. Every state $s\in S$ is reachable from the initial state $s_0$.
\end{lemma}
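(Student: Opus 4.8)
The plan is to read $S$ not as an arbitrary collection of subsets of $cl(\phi)$ but as the set \emph{generated} from $s_0$ by closing under the transition relation $T$ — precisely the on-the-fly construction in which a new state is produced only as a $T$-successor of an already-computed one. Concretely, I would stratify $S$ as $S=\bigcup_{k\geq 0}S_k$, where $S_0=\{s_0\}$ and $S_{k+1}=S_k\cup\{s'\mid s'\in T(s,\sigma)\text{ for some }s\in S_k,\ \sigma\in\Sigma\}$. First I would check that this closure never escapes $2^{cl(\phi)}$: if $s\subseteq cl(\phi)$ and $s'=X(A)$ for a propositional assignment $A$ of $\xnf{\bigwedge s}^p$, then every $\X\theta$ occurring in $\xnf{\bigwedge s}$ has $\theta$ a subformula of some $\psi\in s$ (by the XNF rules of Theorem~\ref{thm:xnf}, $\theta$ is either the body of a Next-operator in $\psi$ or an Until/Release subformula of $\psi$), so $\theta\in cl(\psi)\subseteq cl(\phi)$ and hence $s'\subseteq cl(\phi)$. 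This confirms $S\subseteq 2^{cl(\phi)}$ and makes the generated $S$ a legitimate state set.

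With $S$ so described, I would prove the reachability claim by induction on the stratum $k$. For the base case, the initial state $s_0\in S_0$ is trivially reachable from itself. For the inductive step, assume every state in $S_k$ is reachable from $s_0$ and take $s'\in S_{k+1}\setminus S_k$; by construction there are $s\in S_k$ and $\sigma\in\Sigma$ with $s'\in T(s,\sigma)$. The inductive hypothesis supplies a run $s_0,s_1,\ldots,s_m=s$ of $T_\phi$ ending at $s$, and appending the transition witnessed by $s'\in T(s,\sigma)$ produces the run $s_0,\ldots,s_m,s'$. Hence $s'$ is reachable from $s_0$, closing the induction; since each $s\in S$ lies in some $S_k$, every state is reachable.

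I expect the main obstacle to be conceptual rather than computational: pinning down what $S$ actually is. The statement is vacuous under the reading ``$S$ is the set of reachable states'' and outright false under ``$S=2^{cl(\phi)}$'', so the substance lies in justifying that the constructed $S$ is exactly the reachability closure of $\{s_0\}$ and, in particular, that $T$-successors never leave $2^{cl(\phi)}$. That well-definedness check is the one place where the structure of XNF (Theorem~\ref{thm:xnf}) is genuinely used; once it is secured, the reachability itself is the routine generation-order induction above, the only fussy point being the definitional convention that $s_0$ counts as reachable from itself.
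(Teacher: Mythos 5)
Your proof is correct and follows essentially the same route as the paper's: the paper also argues by induction on the number of transition steps, observing that every state produced as $X(A)$ for a propositional assignment $A$ of $\xnf{\bigwedge s}^p$ with $s$ reachable is itself reachable, which is exactly your stratum induction with $S$ read as the closure of $\{s_0\}$ under $T$. Your added well-definedness check that $T$-successors stay inside $2^{cl(\phi)}$ is a careful touch the paper leaves implicit, but it does not change the substance of the argument.
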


\begin{definition}[Final State]\label{def:final}
Let $s$ be a state of a transition system $T_{\phi}$. Then $s$ is a \emph{final state} of $T_{\phi}$ iff the Boolean formula $Tail\wedge (\xnf{s})^p$ is satisfiable.
\end{definition}

By introducing the concept of \emph{final state}, we are able to check the satisfiability of the $\ltlf$ formula $\phi$ over $T_{\phi}$.

\begin{theorem}\label{thm:reasoning}
Let $\phi$ be an $\ltlf$ formula. Then $\phi$ is satisfiable  iff there is a final state in $T_{\phi}$. 
\end{theorem}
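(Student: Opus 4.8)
The plan is to prove both directions by relating runs of $T_\phi$ to satisfying traces, using Theorem~\ref{thm:assign} to pass between a formula and the propositional assignments of its XNF, Theorem~\ref{thm:xnf} to replace $\xnf{\bigwedge s}$ by $\bigwedge s$, and Lemma~\ref{lem:reachable} to guarantee that every state lies on a run from $s_0$. The bridge between the two notions is an observation about final states that I would isolate first: \emph{a state $s$ is final iff there is a single-element trace $\langle\sigma\rangle$ with $Tail\in\sigma$ and $\langle\sigma\rangle\models\bigwedge s$}. The forward reading of Definition~\ref{def:final} is the delicate point. Because $\phi$ is in TNF, every strong-next subformula occurring in $cl(\phi)$ is guarded by $\neg Tail$ (for $\X$- and $\U$-subformulas) or sits in a disjunction carrying a $Tail$ disjunct (for $\R$-subformulas); hence in $(\xnf{s})^p$ every $\X$-atom occurs either in a conjunct with $\neg Tail$ or in a disjunct already made true by $Tail$. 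Setting $Tail$ true therefore renders all $\X$-atoms irrelevant, so a Boolean model of $Tail\wedge(\xnf{s})^p$ may be taken to assign every $\X$-atom false; the surviving literal part is exactly the single state $\sigma$, and Theorems~\ref{thm:assign} and~\ref{thm:xnf} give $\langle\sigma\rangle\models\bigwedge s$. The converse reading is immediate, since on a one-element trace every $\X$-atom is false anyway.

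For the ($\Leftarrow$) direction, suppose $T_\phi$ contains a final state $s^\ast$. By Lemma~\ref{lem:reachable} there is a run $s_0,s_1,\ldots,s_m=s^\ast$ on some trace $\tau_0\cdots\tau_{m-1}$. By the observation pick $\sigma^\ast$ with $Tail\in\sigma^\ast$ and $\langle\sigma^\ast\rangle\models\bigwedge s^\ast$, and form $\xi=\tau_0\cdots\tau_{m-1}\sigma^\ast$ of length $m+1$. I would then show by downward induction on $i$ that $\xi_i\models\bigwedge s_i$. The base case $i=m$ is $\langle\sigma^\ast\rangle\models\bigwedge s^\ast$. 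For the step, $s_{i+1}\in T(s_i,\tau_i)$ supplies a propositional assignment $A$ of $(\xnf{\bigwedge s_i})^p$ with $\tau_i\supseteq L(A)$ and $s_{i+1}=X(A)$; the literals of $A$ hold at $\xi[i]=\tau_i$, and each $\X\theta\in A$ satisfies $\xi_i\models\X\theta$ because $|\xi_i|>1$ and $\theta\in s_{i+1}$ with $\xi_{i+1}\models\bigwedge s_{i+1}$ by the inductive hypothesis. Hence $\xi_i\models\bigwedge A$, and Theorems~\ref{thm:assign} and~\ref{thm:xnf} give $\xi_i\models\bigwedge s_i$. At $i=0$ this yields $\xi\models\bigwedge s_0=\phi$.

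For the ($\Rightarrow$) direction, suppose $\xi\models\phi$. The $\F Tail$ conjunct of the TNF form forces $Tail$ at some position of $\xi$, and the $\neg Tail$-guarding lets me truncate $\xi$ at its first $Tail$-position, so I may assume $Tail$ occurs exactly at the last position $n$ of $\xi=\sigma_0\cdots\sigma_n$. I would build a run $s_0,\ldots,s_n$ along $\xi$: set $s_0=\{\phi\}$; given $\xi_i\models\bigwedge s_i$ with $i<n$, apply Theorems~\ref{thm:xnf} and~\ref{thm:assign} to get a propositional assignment $A_i$ of $(\xnf{\bigwedge s_i})^p$ with $\xi_i\models\bigwedge A_i$, and put $s_{i+1}=X(A_i)$. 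Then $\xi[i]\supseteq L(A_i)$ gives $s_{i+1}\in T(s_i,\xi[i])$, and since $|\xi_i|>1$ each $\X\theta\in A_i$ forces $\xi_{i+1}\models\theta$, so $\xi_{i+1}\models\bigwedge s_{i+1}$. At the end $\xi_n=\langle\sigma_n\rangle$ satisfies $\bigwedge s_n$ with $Tail\in\sigma_n$, so by the observation $s_n$ is a final state of $T_\phi$.

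I expect the main obstacle to be the final-state observation and, relatedly, the claim that a satisfying trace can be assumed to carry $Tail$ only at its last position. Both rest on the TNF discipline that every strong next is guarded by $\neg Tail$: this is what makes $Tail$ behave as a genuine end-marker, lets me discard the $\X$-atoms when testing a final state, and justifies truncating at the first $Tail$. Verifying this guarding claim requires a structural induction over the TNF and XNF constructions rather than a one-line argument, and is the step I would write out most carefully; the two run-construction inductions are then routine given Theorems~\ref{thm:assign} and~\ref{thm:xnf}.
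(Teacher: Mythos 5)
Your proposal is correct and follows essentially the same route as the paper: your ``final-state observation'' is exactly the paper's Lemma~\ref{lem:final} (proved there by the same $\neg Tail$-guarding argument on $\X$-atoms), and your two run-construction inductions are precisely the paper's ($\Rightarrow$) and ($\Leftarrow$) arguments built from Theorem~\ref{thm:assign}, Theorem~\ref{thm:xnf}, and Lemma~\ref{lem:reachable}. The only minor difference is that you additionally truncate the satisfying trace at its first $Tail$ position (the content of the paper's Lemma~\ref{lem:tnf1}), a step the paper's proof of this theorem sidesteps because its Lemma~\ref{lem:final} does not insist that $Tail$ actually occur in the one-element trace.
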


An intuitive solution from Theorem \ref{thm:reasoning} to check the satisfiability of $\phi$ is to construct states of $T_{\phi}$ until (1) either a final state is found by Definition \ref{def:final}, meaning $\phi$ is satisfiable; or (2) all states in $T_{\phi}$ are generated but no final state can be found, meaning $\phi$ is unsatisfiable. This approach is simple and easy to implement, however, it does not perform well according to our preliminary experiments. 
%In next section, we 
%present a novel algorithm to check the satisfiability of $\ltlf$ formula by leveraging the full information from SAT solvers. 

\section{Conflict-Driven $\ltlf$ Satisfiability Checking}

In this section, we present a conflict-driven algorithm for $\ltlf$ satisfiability checking. The new algorithm is inspired by \cite{LZPV15}, where information of both satisfiability and unsatisfiability results of  SAT solvers are used. The motivation is as follows:
In Definition \ref{def:final}, if the Boolean formula $Tail\wedge \xnf{s}^p$ is unsatisfiable, the SAT solver is able to provide a UC (Unsatisfiable Core) $c$ such that $c\subseteq s$ and $Tail\wedge\xnf{c}^p$ is still unsatisfiable. It means that $c$ represents a set of states that are not final states. By adding a new constraint $\neg (\bigwedge_{\psi\in c}\X\psi)$, the SAT solver 
can provide a model (if exists) that avoids re-generation of those states in $c$, which accelerates the search of final states. 
More generally, we define the \emph{conflict sequence}, which is used to maintain all information of UCs acquired during the checking process.

\begin{definition}[Conflict Sequence]\label{def:cs}
Given an $\ltlf$ formula $\phi$, a conflict sequence $\cs$ for the transition system $T_{\phi}$ is a finite 
sequence of set of states such that:
\iffalse
(1) $\phi\in\cs[i]$ for $0 \leq i\leq |\cs|$; (2) $Tail\wedge \bigvee \cs[0]$ is unsatisfiable; and (3) $\psi\wedge \neg X(\bigvee\cs[i])$\footnote{This formula is short for $\bigwedge\bigvee\neg X(\cs[i][j][k]$), where $\cs[i][j]\in\cs[i]$ is a conjunction and $\cs[i][j][k]$ is a conjunct of $\cs[i][j]$.}$ 
    \textit{ is unsatisfiable}$ for $\psi\in\cs[i+1](0\leq i\leq |\cs|-1)$.
\fi
\begin{enumerate}[noitemsep,topsep=0pt]
    \item The initial state $s_0=\{\phi\}$ is in $\cs[i]$ for $0 \leq i< |\cs|$;
    %\item $Tail\wedge \bigvee \cs[0]$ is unsatisfiable;
    \item Every state in $\cs[0]$ is not a final state;
    %\item $\psi\wedge \neg X(\bigvee\cs[i])$\footnote{This formula is short for $\bigwedge\bigvee\neg X(\cs[i][j][k]$), where $\cs[i][j]\in\cs[i]$ is a conjunction and $\cs[i][j][k]$ is a conjunct of $\cs[i][j]$.}$ \textit{ is unsatisfiable}$ for $\psi\in\cs[i+1](0\leq i\leq |\cs|-1)$.
    \item For every state $s\in\cs[i+1]$ ($0\leq i<|\cs|-1$), all the one-transition next states of $s$ are included in $\cs[i]$. 
\end{enumerate}
We call each $\cs[i]$ is a \emph{frame}, and $i$ is the  \emph{frame level}.

\end{definition}

%We call each $\cs[i]$ is a \emph{frame}, and $i$ is the  \emph{frame level}.
In the definition, $|\cs|$ represents the length of $\cs$ and $\cs[i]$ denotes the $i$-th element of $\cs$. 
Consider the transition system shown in Figure \ref{fig:cs}, in which $s_0$ is the initial state and $s_4$ is the final state. 
Based on Definition \ref{def:cs}, the sequence $\cs=\{s_0,s_1,s_2,s_3\}, \{s_0,s_1\}, \{s_0\}$ is a conflict sequence.
Notably, the conflict sequence for a transition system may not be unique. For the above example, the sequence 
$\{s_0,s_1\}, \{s_0\}$ is also a conflict sequence for the system. 
This suggests that the construction of a conflict sequence is algorithm-specific.
Moreover, it is not hard to induce that every non-empty prefix of a conflict sequence is also a conflict sequence. 
For example, a prefix of $\cs$ above, i.e. $\{s_0,s_1,s_2,s_3\}, \{s_0, s_1\}$, is a conflict sequence. 
As a result, a conflict sequence can be constructed iteratively, i.e. the elements can be generated (and updated) in order. 
Our new algorithm is motivated by these two observations.

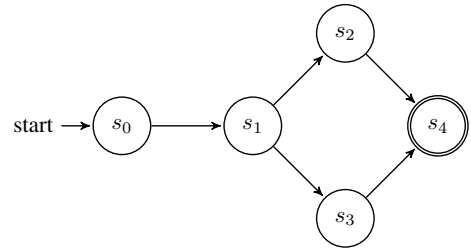
\begin{figure}[!htb]
\centering
\scalebox{0.87}{
\begin{tikzpicture}[->,>=stealth',shorten >=1pt,auto,node distance=2cm,
                    semithick]
  \tikzstyle{every state}=[fill=none,draw=black,text=black]

  \node[state,initial] 		 (A)              {$s_0$};
  \node[state]               (B) [right of=A] {$s_1$};
  \node[state]               (C) [above right of=B] {$s_2$};
  \node[state]               (D) [below right of=B] {$s_3$};
  \node[state,accepting]         (E) [above right of=D] {$s_4$};

  \path (A) edge              node {} (B)
        (B) edge 			  node {} (C)
        (B) edge 			  node {} (D)
        (C) edge              node {} (E)
        (D) edge              node {} (E);

\end{tikzpicture}
}
\caption{An example transition system for the conflict sequence.}\label{fig:cs}
\end{figure}

An inherent property of conflict sequences is described in the following lemma. 
\begin{lemma}\label{lem:cs}
Let $\phi$ be an $\ltlf$ formula with a conflict sequence $\cs$ for the transition system $T_{\phi}$, then $\bigcap_{0\leq j\leq i}\cs[j] (0\leq i< |\cs|)$ represents a set of states that cannot reach a final state in up to $i$ steps. 
\end{lemma}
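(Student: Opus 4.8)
The plan is to prove the statement by induction on the index $i$, establishing that every state $s\in\bigcap_{0\leq j\leq i}\cs[j]$ fails to reach a final state in up to $i$ steps. Throughout I will work with the natural recursive characterization of bounded reachability in $T_\phi$: a state $s$ reaches a final state in up to $0$ steps iff $s$ is itself a final state, and for $k\geq 1$, $s$ reaches a final state in up to $k$ steps iff either $s$ is final or some one-transition next state $s'$ of $s$ reaches a final state in up to $k-1$ steps. This equivalence follows directly by splitting a witnessing path on whether its first transition is taken, and reduces the lemma to a clean induction.

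For the base case $i=0$, the intersection is simply $\cs[0]$, and clause~(2) of Definition~\ref{def:cs} states that every element of $\cs[0]$ is not a final state; hence no such state reaches a final state in up to $0$ steps.

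For the inductive step, I assume the claim for $i$ and take any $s\in\bigcap_{0\leq j\leq i+1}\cs[j]$. The key observation is that $s$ belongs to \emph{every} frame $\cs[1],\ldots,\cs[i+1]$ simultaneously, not merely to $\cs[i+1]$. Applying clause~(3) of Definition~\ref{def:cs} at each level $k\in\{1,\ldots,i+1\}$ (using $s\in\cs[k]$) shows that every one-transition next state $s'$ of $s$ lies in $\cs[k-1]$; ranging over all such $k$ yields $s'\in\bigcap_{0\leq j\leq i}\cs[j]$. By the induction hypothesis, no such $s'$ reaches a final state in up to $i$ steps. Since $s\in\cs[0]$ also guarantees, via clause~(2), that $s$ is not itself final, the recursive characterization above forces $s$ to fail to reach a final state in up to $i+1$ steps, completing the induction.

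The step I expect to be the main obstacle is recognizing that the induction hypothesis must be fed a successor lying in the \emph{full} intersection $\bigcap_{0\leq j\leq i}\cs[j]$, and that this is exactly what membership of $s$ in all frames $\cs[0],\ldots,\cs[i+1]$ delivers: using only $s\in\cs[i+1]$ would place the successors in $\cs[i]$ alone, which is too weak to invoke the hypothesis, so the simultaneous membership across levels is the linchpin. A secondary point to handle carefully is the precise off-by-one in the meaning of ``up to $i$ steps'' (number of states versus number of transitions in a run, per Definition~\ref{def:ts}), so that the base case and the recursive unfolding are aligned with the notion of one-transition next state.
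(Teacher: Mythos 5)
Your proof is correct, but it is organized differently from the paper's. The paper first proves, by induction over frame indices, the per-frame claim that each $\cs[i]$ individually consists of states that cannot reach a final state in \emph{exactly} $i$ steps (base case from clause~2, inductive step from clause~3), and then obtains the ``up to $i$ steps'' statement for $\bigcap_{0\leq j\leq i}\cs[j]$ as an immediate corollary, since membership in the intersection rules out exact-$j$-step reachability for every $j\leq i$ separately. You instead run the induction directly on the cumulative statement about the intersection, which forces the extra bookkeeping you correctly identify as the linchpin: a state $s$ in $\bigcap_{0\leq j\leq i+1}\cs[j]$ lies in every frame $\cs[1],\ldots,\cs[i+1]$ at once, so applying clause~3 at each level places every one-transition successor in the shifted intersection $\bigcap_{0\leq j\leq i}\cs[j]$, exactly what your induction hypothesis needs, and clause~2 via $s\in\cs[0]$ handles non-finality of $s$ itself. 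The paper's decomposition is lighter because its induction hypothesis concerns a single frame, and the intersection enters only trivially at the end; your version buys a self-contained proof of the lemma as stated, without introducing the auxiliary ``exactly $i$ steps'' notion, and it makes explicit the structural fact that the intersection sequence is closed under taking successors shifted down one level --- which is also the invariant shape that the \cdlsc algorithm exploits. Your closing caution about the off-by-one between run length and step count (Definition~\ref{def:ts} measures $|r|$ in states, while ``one-transition next state'' is phrased as reachability in $1$ step) is well placed, as the paper itself glosses over this and both proofs implicitly adopt the convention that reachability in $0$ steps means being final.
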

\begin{proof}
    We first prove $\cs[i](i\geq 0)$ is a set of states that cannot reach a final state in $i$ step. 
    Basically from Definition \ref{def:cs}, 
    $\cs[0]$ is a set of states that are not final states. Inductively, assume $\cs[i](i\geq 0)$ is a set of states that cannot reach 
    a final state in $i$ steps. From Item 3 of Definition \ref{def:cs}, every state $s\in\cs[i+1]$ satisfies all its one-transition next states are in $\cs[i]$, thus every state $s\in\cs[i+1]$ cannot reach a final state in $i+1$ steps. 
    Now since $\cs[i](i\geq 0)$ is a set of states that cannot reach a final state in $i$ steps, $\bigcap_{0\leq j\leq i}\cs[j]$ is a set of states that cannot reach a final state in up to $i$ steps.
\end{proof}

%The following theorem shows how to use the conflict sequence for $\ltlf$ satisfiability checking. 

We are able to utilize the conflict sequence to accelerate the satisfiability checking of $\ltlf$ formulas, using the theoretical foundations provided by Theorem \ref{thm:sat} and \ref{thm:unsat} below.

\begin{theorem}\label{thm:sat}
    The $\ltlf$ formula $\phi$ is satisfiable iff there is a run $r=s_0, s_1,\ldots, s_n(n\geq 0)$ of $T_{\phi}$ such that 
    (1) $s_n$ is a final state; and (2) $s_i$ ($0\leq i \leq n$) is not in $\cs[n-i]$ for every conflict sequence $\cs$ of $T_{\phi}$ 
    with $|\cs|> n-i$.
\end{theorem}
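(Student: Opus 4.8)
The plan is to prove the two directions separately, exploiting Theorem~\ref{thm:reasoning} (satisfiability is equivalent to the existence of a final state in $T_\phi$), Lemma~\ref{lem:reachable} (every state is reachable from $s_0$), and Lemma~\ref{lem:cs} (the frame $\cs[k]$ contains only states that cannot reach a final state in $k$ steps). The key observation I would stress is that condition~(2) is not an extra requirement one has to engineer into the run; rather, it is forced automatically, since any state lying on a run that ends in a final state can itself reach that final state and therefore cannot belong to the corresponding frame.

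For the ($\Leftarrow$) direction I would argue that condition~(1) alone already suffices. If some run $r = s_0, s_1, \ldots, s_n$ ends in a final state $s_n$, then $s_n$ is a final state of $T_\phi$, so by Theorem~\ref{thm:reasoning} the formula $\phi$ is satisfiable. Condition~(2) plays no role here; it merely restricts which runs the search procedure is permitted to consider, and does not affect the existence claim.

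For the ($\Rightarrow$) direction, suppose $\phi$ is satisfiable. By Theorem~\ref{thm:reasoning} there is a final state $s_n$ in $T_\phi$, and by Lemma~\ref{lem:reachable} it is reachable from the initial state, so there is a run $r = s_0, s_1, \ldots, s_n$ with $s_n$ final; this yields condition~(1). For condition~(2), I would fix any $i$ with $0 \le i \le n$ and any conflict sequence $\cs$ with $|\cs| > n - i$ (so that $\cs[n-i]$ is defined). The suffix $s_i, s_{i+1}, \ldots, s_n$ of $r$ witnesses that the final state $s_n$ is reachable from $s_i$ in $n-i$ steps. By Lemma~\ref{lem:cs}, $\cs[n-i]$ consists exactly of states that cannot reach a final state in $n-i$ steps; since $s_i$ can, the contrapositive gives $s_i \notin \cs[n-i]$. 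Because this reasoning uses only the intrinsic reachability property of the run together with a property of $\cs$ guaranteed for every conflict sequence, the single run $r$ satisfies condition~(2) against all admissible $\cs$ simultaneously.

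The main obstacle is bookkeeping rather than conceptual: I must align the step-counting convention of the reachability relation with the frame indexing of the conflict sequence, so that the suffix $s_i, \ldots, s_n$ corresponds to exactly $\cs[n-i]$ and not an off-by-one neighbor. I also need to be careful that condition~(2) quantifies over all conflict sequences at once; this is handled because the membership test $s_i \notin \cs[n-i]$ follows from Lemma~\ref{lem:cs}, which holds uniformly for every conflict sequence, combined with a reachability fact about $r$ that is independent of the choice of $\cs$.
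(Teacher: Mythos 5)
Your proof is correct and takes essentially the same route as the paper's: the ($\Leftarrow$) direction via Theorem~\ref{thm:reasoning} alone (condition~(2) indeed playing no role), and the ($\Rightarrow$) direction by taking the run to a final state guaranteed by Theorem~\ref{thm:reasoning} and noting that each $s_i$ can reach a final state in $n-i$ steps, hence cannot lie in $\cs[n-i]$. The only nuance is that the property you attribute to Lemma~\ref{lem:cs} --- that the single frame $\cs[k]$ contains only states unable to reach a final state in exactly $k$ steps --- is established inside the proof of that lemma rather than in its statement (which concerns the intersections $\bigcap_{0\leq j\leq i}\cs[j]$ and ``up to $i$ steps''), a distinction the paper itself flags by citing ``the proof of Lemma~\ref{lem:cs}''.
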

\begin{proof}
($\Leftarrow$) Since $s_n$ is a final state, $\phi$ is satisfiable according to Theorem \ref{thm:reasoning}. 
($\Rightarrow$) Since $\phi$ is satisfiable, there is a finite trace $\xi$ such that the corresponding run $r$ of $T_{\phi}$ on $\xi$ ends with a final state (according to Theorem \ref{thm:reasoning}). Let  $r$ be $s_0\tran{}s_1\tran{}\ldots s_{n}$ where $s_{n}$ is the final state. It holds that $s_i$ ($0\leq i\leq n$) is a state that can reach a final state in $n-i$ steps. 
Moreover for every $\cs$ of $T_{\phi}$ with $|\cs|>n-i$, $\cs[n-i]$ ($\cs[n-i]$ is meaningless when $|\cs|\leq n-i$) 
represents a set of states that cannot 
reach a final state in $n-i$ steps (From the proof of Lemma \ref{lem:cs}). As a result, it is true that $s_i$ is not in $\cs[n-i]$ 
if $|\cs|>n-i$.
\end{proof}

Theorem \ref{thm:sat} suggests that to check whether a state $s$ can reach a final state in $i$ steps ($i\geq 1$), finding a 
one-transition next state $s'$ of $s$ that is not in $\cs[i-1]$ is necessary; as $s'\in\cs[i-1]$ imples 
$s'$ cannot reach a final state in $i-1$ steps (From the proof of Lemma \ref{lem:cs}). If all one-transition next states of $s$ are 
in $\cs[i-1]$, $s$ cannot reach a final state in $i$ steps.

\begin{theorem}\label{thm:unsat}
    The $\ltlf$ formula $\phi$ is unsatisfiable iff there is a conflict sequence $\cs$ and $i\geq 0$ such that $\bigcap_{0\leq j\leq i}\cs[j] \subseteq \cs[i+1]$.
\end{theorem}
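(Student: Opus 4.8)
The plan is to prove the two directions separately, dispatching the forward direction with a trivial witness and putting all the real work into the backward (``fixpoint'') direction, which is an inductiveness argument in the spirit of IC3's convergence test.

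For the ($\Rightarrow$) direction I would argue as follows. Assume $\phi$ is unsatisfiable; by Theorem~\ref{thm:reasoning}, $T_\phi$ then contains no final state. I claim the two-frame sequence $\cs = S, S$, with every frame equal to the (finite) full state set $S$, is a conflict sequence: item~1 of Definition~\ref{def:cs} holds since $s_0\in S$; item~2 holds because no state is final; and item~3 holds trivially since every one-transition next state lies in $S=\cs[0]$. Taking $i=0$ gives $\bigcap_{0\leq j\leq 0}\cs[j] = S \subseteq S = \cs[1]$, which is exactly the required inclusion. So this direction reduces to a routine verification once Theorem~\ref{thm:reasoning} supplies the absence of final states.

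For the ($\Leftarrow$) direction, suppose $\cs$ and $i\geq 0$ satisfy $I := \bigcap_{0\leq j\leq i}\cs[j] \subseteq \cs[i+1]$ (note this forces $|\cs|\geq i+2$ so that $\cs[i+1]$ is defined). The crux is to show that $I$ is closed under one-transition next states. First I would fix $s\in I$, so $s\in\cs[m]$ for every $m\in\{0,\dots,i\}$, and the hypothesis adds $s\in\cs[i+1]$; hence $s\in\cs[m]$ for every $m\in\{1,\dots,i+1\}$. For each such $m$, item~3 of Definition~\ref{def:cs} forces every one-transition next state of $s$ to lie in $\cs[m-1]$. Intersecting over $m\in\{1,\dots,i+1\}$, every one-transition next state of $s$ lies in $\bigcap_{0\leq j\leq i}\cs[j]=I$, establishing closure. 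To finish, I would use item~1 of Definition~\ref{def:cs} to get $s_0\in I$; closure then puts every state reachable from $s_0$ inside $I$. Since $I\subseteq\cs[0]$ and, by item~2 (equivalently Lemma~\ref{lem:cs}), $\cs[0]$ contains no final state, no final state is reachable from $s_0$. By Lemma~\ref{lem:reachable} every state of $T_\phi$ is reachable from $s_0$, so $T_\phi$ has no final state at all, and Theorem~\ref{thm:reasoning} yields that $\phi$ is unsatisfiable.

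The main obstacle is precisely this closure step. It is tempting to invoke only the hypothesis $I\subseteq\cs[i+1]$, but that alone pushes next states down to $\cs[i]$, not into the full intersection $I$. The key realization is that membership of $s$ in each lower frame $\cs[1],\dots,\cs[i]$ independently sends its next states one level down, and combining all of these with the level-$(i+1)$ consequence of the hypothesis is exactly what closes $I$ under transitions. The only delicate part is the index bookkeeping: confirming $|\cs|\geq i+2$ and checking that item~3 legitimately applies at every level $m\in\{1,\dots,i+1\}$ (i.e.\ $1\leq m\leq|\cs|-1$).
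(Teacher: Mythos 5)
Your proof is correct, and it diverges from the paper's in both directions in instructive ways. For ($\Rightarrow$), the paper builds the \emph{maximal} conflict sequence---each $\cs[i]$ is taken to contain all states that cannot reach a final state in up to $i$ steps---and then extracts a stabilization index from the decreasing chain $\bigcap_{0\leq j\leq i}\cs[j] \supseteq \bigcap_{0\leq j\leq i+1}\cs[j] \supseteq S$; your constant two-frame witness $\cs = S, S$ with $i=0$ is more elementary and sidesteps the chain argument entirely (it is sound because, by Theorem~\ref{thm:reasoning}, unsatisfiability makes every state of $S$ non-final, and $S$ is trivially closed under $T$). For ($\Leftarrow$), the paper first rewrites the hypothesis as the fixpoint equality $\bigcap_{0\leq j\leq i}\cs[j] = \bigcap_{0\leq j\leq i+1}\cs[j]$ and then simply \emph{asserts} that all states reachable from $\phi$ lie in the intersection; your proof supplies exactly the step the paper leaves implicit, namely that membership of $s$ in \emph{every} frame $\cs[1],\dots,\cs[i+1]$ (the last via the hypothesis) pushes its one-transition successors into $\cs[0],\dots,\cs[i]$ simultaneously, so the intersection $I$ is inductive---your own remark that $I\subseteq\cs[i+1]$ alone only reaches $\cs[i]$ is precisely why the paper's terse assertion needs this intersection argument to be rigorous. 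Your index bookkeeping ($|\cs|\geq i+2$, item~3 applying for $1\leq m\leq |\cs|-1$) is also handled correctly, so on balance your write-up is a tightened version of the paper's backward direction paired with a strictly simpler forward witness.
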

\begin{proof}
($\Leftarrow$) $\bigcap_{0\leq j\leq i}\cs[j] \subseteq \cs[i+1]$ is true implies that $\bigcap_{0\leq j\leq i}\cs[j] = \bigcap_{0\leq j\leq i+1}\cs[j]$ is true. Also from Lemma \ref{lem:cs} we know $\bigcap_{0\leq j\leq i}\cs[j]$ is a set of states that cannot reach a final state in up to i steps. Since $\phi\in \cs[i]$ is true for each $i\geq 0$, $\phi$ is in $\bigcap_{0\leq j\leq i}\cs[j]$. 
Moreover, $\bigcap_{0\leq j\leq i}\cs[j] = \bigcap_{0\leq j\leq i+1}\cs[j]$ is true implies all reachable states from $\phi$ are included in $\bigcap_{0\leq j\leq i}\cs[j]$. We have known all states in $\bigcap_{0\leq j\leq i}\cs[j]$ are not final states, so $\phi$ is unsatisfiable. 

($\Rightarrow$) If $\phi$ is unsatisfiable, every state in $T_{\phi}$ is not a final state. Let $S$ be the set of states of $T_{\phi}$. According to Lemma \ref{lem:cs}, $\bigcap_{0\leq j\leq i}\cs[j] (i\geq 0)$ contains the set of states that are not final in up to $i$ steps. Now we let $\cs$ satisfy that  $\bigcap_{0\leq j\leq i}\cs[j] (i\geq 0)$ contains all states that are not final in up to $i$ steps, so $\bigcap_{0\leq j\leq i}\cs[j]$ includes all reachable states from $\phi$, as $\phi$ is unsatisfiable. However, because $\bigcap_{0\leq j\leq i}\cs[j] \supseteq \bigcap_{0\leq j\leq i+1}\cs[j]\supseteq S (i\geq 0)$, there must be an $i\geq 0$ such that $\bigcap_{0\leq j\leq i}\cs[j] = \bigcap_{0\leq j\leq i+1}\cs[j]$, which indicates that $\bigcap_{0\leq j\leq i}\cs[j]\subseteq \cs[i+1]$ is true. 
\end{proof}

\noindent\textbf{Algorithm Design.} The algorithm, named \cdlsc (Conflict-Driven $\ltlf$ Satisfiability Checking), 
constructs the transition system on-the-fly. The initial state $s_0$ 
is fixed to be $\{\phi\}$ where $\phi$ is the input formula. From Definition \ref{def:final}, whether a state $s$ is final 
is reducible to the satisfiability checking of the Boolean formula $Tail\wedge \xnf{s}^p$. If $s_0$ is a final state, 
there is no need to maintain the conflict sequence in \cdlsc, and the algorithm can return SAT immediately; Otherwise, the 
conflict sequence is maintained as follows.
\begin{itemize}
    \item In \cdlsc, every element of $\cs$ is a set of set of subformulas of the input formula $\phi$. 
    Formally, each $\cs[i]$ ($i\geq 0$) can be represented by the $\ltlf$ formula $\bigvee_{c\in \cs[i]}\bigwedge_{\psi\in c} \psi$ 
    where $c$ is 
    a set of subformulas of $\phi$. We mix-use the notation $\cs[i]$ for the corresponding $\ltlf$ formula as well. 
    Every state $s$ satisfying $s\Rightarrow\cs[i]$ is included in $\cs[i]$. 
    \item $\cs$ is created iteratively. In each iteration $i\geq 0$, $\cs[i]$ is initialized as the empty set.
    \item To compute elements in $\cs[0]$, we consider an existing state $s$ (e.g. $s_0$). If the Boolean formula 
    $Tail\wedge\xnf{s}^p$ is unsatisfiable, $s$ is not a final state and can be added into $\cs[0]$ from Item 2 of 
    Definition \ref{def:cs}. 
    Moreover, \cdlsc leverages the Unsatisfiable Core (UC) technique from the SAT community to add a set of states, all of which are not 
    final and include $s$, to $\cs[0]$. This set of states, denoted as $c$, 
    is also represented by a set of $\ltlf$ formulas and satisfies $c\subseteq s$. 
    The detail to obtain $c$ is discussed below. 
    \item To compute elements in $\cs[i+1]$ ($i\geq 0$), we consider the Boolean formula $(\xnf{s}\wedge\neg \X(\cs[i]))^p$, 
    where $\X(\cs[i])$ represents the $\ltlf$ formula $\bigvee_{c\in\cs[i]}\bigwedge_{\psi\in c}\X(\psi)$.  
    The above Boolean formula is used to check whether there is a one-transition next state of $s$ that is not in $\cs[i]$. 
    If the formula is unsatisfiable, all the one-transition next states of $s$ are in $\cs[i]$, thus $s$ can be added into 
    $\cs[i+1]$ according to Item 3 of Definition \ref{def:cs}. Similarly, we also utilize the UC technique to obtain a subset $c$ of 
    $s$, such that $c$ represents a set of states that can be added into $\cs[i+1]$. 
\end{itemize} 

As shown above, every Boolean formula sent to a SAT solver has the form of ($\xnf{s}\wedge \theta)^p$ where $s$ is a state and 
$\theta$ is either $Tail$ or $\neg \X(\cs[i])$. 
Since every state $s$ consists of a set of $\ltlf$ formulas, the Boolean formula can be rewritten as 
$\alpha_1=(\bigwedge_{\psi\in s}\xnf{\psi}\wedge \theta)^p$. Moreover, we introduce a new Boolean variable $p_{\psi}$ for each 
$\psi\in s$, and re-encode the formula to be 
$\alpha_2=\bigwedge_{\psi\in s}p_{\psi} \wedge (\bigwedge_{\psi\in s}(\xnf{\psi}\vee \neg p_{\psi})\wedge \theta)^p$. 
$\alpha_2$ is satisfiable iff $\alpha_1$ is satisfiable, and $A$ is an assignment of $\alpha_2$ iff 
$A\backslash \{p_{\psi}|\psi\in s\}$ is an assignment of $\alpha_1$. Sending $\alpha_2$ instead of $\alpha_1$ to the SAT solver that 
supports assumptions (e.g. Minisat \cite{ES03}) enables the SAT solver to return the UC, which is a set of $s$, when $\alpha_2$ is 
unsatisfiable. For example, assume $s= \{\psi_1, \psi_2, \psi_3\}$ and $\alpha_2$ is sent to the SAT solver with 
$\{p_{\psi_i}|i\in \{1, 2, 3\}\}$ being the assumptions. If the SAT solver returns unsatisfiable and the UC $\{p_{\psi_1}\}$, 
the set $c=\{\psi_1\}$, which represents every state including $\psi_1$, is the one to be added into the corresponding $\cs[i]$. 
We use the notation $get\_uc()$ for the above procedure. 

The pseudo-code of \cdlsc is shown in Algorithm \ref{alg:cdlsc}.  
Line \ref{alg:cdlsc:tailstart}-\ref{alg:cdlsc:tailend} considers the situation when the input formula $\phi$ is a final state itself. Otherwise, the first frame $\cs[0]$ is initialized to $\{\phi\}$ (Line \ref{alg:cdlsc:csinit}), and the current frame level is set to 0 (Line \ref{alg:cdlsc:framelevelinit}). After that, the loop body (Line \ref{alg:cdlsc:loopstart}-\ref{alg:cdlsc:loopend}) keeps updating the elements of $\cs$ iteratively, until either the procedure $try\_satsify$ returns true, which means to find a model of $\phi$, or the procedure $inv\_found$ returns true, which is the implementation of Theorem \ref{thm:unsat}. The loop continues to create a new frame in $\cs$ if neither of the procedures succeeds to return true. To describe conveniently, we say every run of the while loop body in Algorithm \ref{alg:cdlsc} is an \emph{iteration}.

%\vspace{-0.2in}
\begin{algorithm}
\caption{Implementation of \cdlsc}\label{alg:cdlsc}
  \begin{algorithmic}[1]      
       \REQUIRE An $\ltlf$ formula $\phi$.
       \ENSURE  SAT or UNSAT.
       \IF{$Tail\wedge \xnf{\phi}^p$ is satisfiable}\label{alg:cdlsc:tailstart}
            \RETURN SAT;
        \ENDIF\label{alg:cdlsc:tailend}
       \STATE Set $\cs[0] := \{\phi\}$;\label{alg:cdlsc:csinit}
       \STATE Set $frame\_level := 0$;\label{alg:cdlsc:framelevelinit}
       \WHILE{true}\label{alg:cdlsc:loopstart}
            \IF{$try\_satisfy (\phi,frame\_level)$ returns true}\label{alg:cdlsc:trysatisfy}
                \RETURN SAT;
            \ENDIF
            \IF{$inv\_found (frame\_level)$ returns true}
                \RETURN UNSAT;
            \ENDIF
            \STATE $frame\_level := frame\_level+1$;
            \STATE Set $\cs[frame\_level] = \emptyset$;
       \ENDWHILE\label{alg:cdlsc:loopend}
     \end{algorithmic}
\end{algorithm}%

The procedure $try\_satisfy$ is responsible for updating $\cs$. Taking an formula $\phi$ and the frame level $frame\_level$ currently working on, $try\_satisfy$ returns true iff a model of $\phi$ can be found, with the length of $frame\_level+1$. As shown in Algorithm  \ref{alg:satisfy}, $try\_satisfy$ is implemented in a recursive way. Each time it checks whether a next state of the input $\phi$, which belongs to a lower level (than the input $frame\_level$) frame can be found (Line \ref{alg:satisfy:loopstart}). If the result is positive and such a new state $\phi'$ is constructed, $try\_satisfy$ first checks whether $\phi'$ is a final state when $frame\_level$ is 0 (in which case returns true). If $\phi'$ is not a final state, a UC is extracted from the SAT solver and added to $\cs[0]$ (Line \ref{alg:satisfy:frame0start}-\ref{alg:satisfy:frame0end}). If $frame\_level$ is not 0, $try\_satisfy$ recursively checks whether a model of $\phi'$ can be found with the length of $frame\_level$ (Line \ref{alg:satisfy:recursivestart}-\ref{alg:satisfy:recursiveend}). If the result is negative and such a state cannot be constructed, a UC is extracted from the SAT solver and added into $\cs[frame\_level+1]$ (Line \ref{alg:satisfy:ucstart}-\ref{alg:satisfy:ucend}). 

%
%\vspace{-0.2in}
\begin{algorithm}
\caption{Implementation of $try\_satisfy$}\label{alg:satisfy}
  \begin{algorithmic}[1]      
       \REQUIRE $\phi$: The formula is working on;\\$frame\_level$: The frame level is working on.
       \ENSURE  true or false.
       \STATE Let $\psi = \neg \X(\cs[frame\_level])$;
       \WHILE {$(\psi\wedge\xnf{\phi})^p$ is satisfiable}\label{alg:satisfy:loopstart}
            
            \STATE Let $A$ be the model of $(\psi\wedge\xnf{\phi})^p$;
            \STATE Let $\phi' = X(A)$, i.e. be the next state of $\phi$ extracted from $A$;
            \IF{$frame\_level == 0$}\label{alg:satisfy:frame0start}
                \IF{ $Tail\wedge \xnf{\phi'}^p$ is satisfiable}\label{alg:satisfy:satstart}
                    \RETURN true; \label{alg:satisfy:satend}
                \ELSE
                    \STATE Let $c = get\_uc ()$;\label{alg:satisfy:uc0start}
                    \STATE Add $c$ into $\cs[frame\_level]$;\label{alg:satisfy:uc0end}
                    \STATE Continue;
                \ENDIF
            \ENDIF\label{alg:satisfy:frame0end}
            %\STATE Let $new\_level = next\_work\_level (\phi')$;
            \IF{$try\_satisfy (\phi', frame\_level-1)$ is true}\label{alg:satisfy:recursivestart}
                \RETURN true;
            \ENDIF\label{alg:satisfy:recursiveend}
       \ENDWHILE
       \STATE Let $c = get\_uc ()$;\label{alg:satisfy:ucstart}
       \STATE Add $c$ into $\cs[frame\_level+1]$;\label{alg:satisfy:ucend}
       \RETURN false;
     \end{algorithmic}
\end{algorithm}

Notably, Item 1 of Definition \ref{def:cs}, i.e. $\{\phi\}\in\cs[i]$, is guaranteed for each $i\geq 0$, 
as the original input formula of $try\_satisfy$ is always $\phi$ (Line \ref{alg:cdlsc:trysatisfy} in Algorithm \ref{alg:cdlsc}) 
and there is some $c$ (Line \ref{alg:satisfy:ucend} in Algorithm \ref{alg:satisfy}) 
including $\{\phi\}$ that is added into $\cs[i]$, if no model can be found in the current iteration.

The procedure $inv\_found$ in Algorithm \ref{alg:cdlsc} implements Theorem \ref{thm:unsat} in a straightforward way: It reduces the checking of whether $\bigcap_{0\leq j\leq i}\cs[j] \subseteq \cs[i+1]$  being true on some frame level $i$, to the satisfiability checking of the Boolean formula $\bigwedge_{1\leq j\leq i} \cs[j] \Rightarrow \cs[i+1]$.
Finally, we  state Theorem \ref{thm:terminate} below to provide the theoretical 
guarantee that \cdlsc always terminates correctly.

\begin{lemma}\label{lem:cdlsc}
After each iteration of \cdlsc with no model found, the sequence $\cs$ is a conflict sequence of $T_{\phi}$ for the transition system 
$T_{\phi}$. 
\end{lemma}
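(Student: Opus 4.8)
The plan is to prove the lemma by induction on the number of completed iterations of the while loop of \cdlsc (Algorithm \ref{alg:cdlsc}), showing that each of the three conditions of Definition \ref{def:cs} is an invariant maintained across iterations in which no model is found. For the base case, just before the first iteration $\cs$ has the single frame $\cs[0]=\{\phi\}$; since the loop is entered only when $Tail\wedge\xnf{\phi}^p$ is unsatisfiable (Lines \ref{alg:cdlsc:tailstart}--\ref{alg:cdlsc:tailend}), the initial disjunct $\phi$ already certifies non-finality, so Conditions 1 and 2 hold and Condition 3 is vacuous. For the inductive step I would first observe that $\cs$ is modified only inside $try\_satisfy$ (Algorithm \ref{alg:satisfy}), at exactly two program points: Line \ref{alg:satisfy:uc0end}, which adds a core $c$ to $\cs[0]$, and Line \ref{alg:satisfy:ucend}, which adds a core $c$ to $\cs[frame\_level+1]$; crucially, both operations only enlarge a frame, never shrink one.

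The heart of the argument is the correctness of $get\_uc()$ combined with upward closure of the extracted cores. By the construction underlying $get\_uc()$, the returned core $c$ is a subset of the current state $s$ such that the relevant Boolean formula is already unsatisfiable when restricted to $c$, and because $\xnf{s'}^p$ entails $\xnf{c}^p$ for every state $s'\supseteq c$ (the per-conjunct encoding of $\xnf{\cdot}$ distributes over the conjunction of a state), the same unsatisfiability persists for all $s'\supseteq c$. Applying this at Line \ref{alg:satisfy:uc0end}, where $c\subseteq\phi'$ makes $Tail\wedge\xnf{c}^p$ unsatisfiable, every state $s'\supseteq c$ has $Tail\wedge\xnf{s'}^p$ unsatisfiable and is therefore not final, preserving Condition 2. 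Applying it at Line \ref{alg:satisfy:ucend}, where $c\subseteq\phi$ makes $(\neg\X(\cs[frame\_level])\wedge\xnf{c})^p$ unsatisfiable, every propositional assignment $A$ of $\xnf{s'}^p$ yields a one-transition next state $X(A)$ with $X(A)\Rightarrow\cs[frame\_level]$; hence for every $s'\supseteq c$ all one-transition next states lie in $\cs[frame\_level]$, which is exactly Condition 3 for the pair $(\cs[frame\_level],\cs[frame\_level+1])$.

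It then remains to discharge the invariants that are affected only indirectly. Condition 1 holds because the top-level invocation always runs $try\_satisfy$ on $\phi$, so when it returns false the core added at Line \ref{alg:satisfy:ucend} is a subset of $\{\phi\}$ and its disjunct therefore includes the state $\{\phi\}$, seeding the newly created top frame with $s_0$; for all lower frames $\{\phi\}$ is already present by the induction hypothesis and is never removed. The conditions that could in principle be broken by enlarging a frame are protected by monotonicity: growing a lower frame $\cs[i]$ only makes ``all next states lie in $\cs[i]$'' easier to satisfy, so Condition 3 for pairs below survives, and since no disjunct is ever deleted, Conditions 1 and 2 cannot be invalidated once established.

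The main obstacle I anticipate is the careful accounting of the recursive structure of $try\_satisfy$: a single top-level iteration may add cores to several frames $\cs[0],\ldots,\cs[frame\_level+1]$ through the nested calls at decreasing levels, and the value of $\cs[frame\_level]$ entering $\psi=\neg\X(\cs[frame\_level])$ can itself grow during those nested calls. The delicate point is to argue that the core extracted at Line \ref{alg:satisfy:ucend} is valid against the value of $\cs[frame\_level]$ current at loop exit, and that, because every update is monotone, an invariant established against an earlier (smaller) frame value remains valid against the later (larger) one. Establishing this monotone compatibility is what lets me conclude that all three conditions hold simultaneously for the entire sequence $\cs$ at the end of the iteration.
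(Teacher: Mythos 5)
Your proof is correct and takes essentially the same route as the paper's: both verify the three conditions of Definition \ref{def:cs} by inspecting the frame-update sites of $try\_satisfy$ (Lines \ref{alg:satisfy:uc0end} and \ref{alg:satisfy:ucend}) plus the top-level invocation on $\phi$ for Item 1. You are in fact more careful than the paper's terse proof, which silently relies on the two facts you make explicit --- upward closure of unsatisfiable cores under state inclusion ($\xnf{s'}^p$ entails $\xnf{c}^p$ for $s'\supseteq c$), and monotone growth of $\cs[frame\_level]$ across nested recursive calls --- so your added detail closes, rather than opens, any gap.
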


\begin{theorem}\label{thm:terminate}
The \cdlsc algorithm terminates with a correct result.
\end{theorem}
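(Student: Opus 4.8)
The plan is to prove the statement in two halves: \emph{soundness} (whenever \cdlsc halts, its answer is correct) and \emph{termination} (it always halts). The key enabling fact is Lemma~\ref{lem:cdlsc}: after every iteration the sequence $\cs$ maintained by the algorithm is a genuine conflict sequence of $T_\phi$, so that the semantic characterizations of Lemma~\ref{lem:cs}, Theorem~\ref{thm:sat}, and Theorem~\ref{thm:unsat} may be applied to the concrete $\cs$ that the algorithm holds at any point.

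For soundness I would inspect the two possible outputs. A SAT answer arises either at Lines~\ref{alg:cdlsc:tailstart}--\ref{alg:cdlsc:tailend} of Algorithm~\ref{alg:cdlsc}, where $Tail\wedge\xnf{\phi}^p$ is satisfiable and hence $s_0$ itself is a final state (Definition~\ref{def:final}), or via $try\_satisfy$ returning true, which by Lines~\ref{alg:satisfy:satstart}--\ref{alg:satisfy:satend} of Algorithm~\ref{alg:satisfy} happens exactly when some state $\phi'=X(A)$ reached along the run implicitly exhibited by the chain of recursive calls satisfies $Tail\wedge\xnf{\phi'}^p$, i.e.\ is final; all states on that chain are reachable from $s_0$ by Lemma~\ref{lem:reachable}. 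In either case $T_\phi$ contains a final state, so $\phi$ is satisfiable by Theorem~\ref{thm:reasoning}. A UNSAT answer arises only when $inv\_found$ succeeds, i.e.\ $\bigcap_{0\le j\le i}\cs[j]\subseteq\cs[i+1]$ for the current (valid) conflict sequence and some $i$; Theorem~\ref{thm:unsat} then yields unsatisfiability of $\phi$ directly.

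For termination I would argue in two layers. First, each individual call to $try\_satisfy$ halts, by induction on $frame\_level$ together with finiteness of $S\subseteq 2^{cl(\phi)}$: every pass of the while loop that does not return adds a blocking core to $\cs[frame\_level]$ — directly in the base case (Lines~\ref{alg:satisfy:uc0start}--\ref{alg:satisfy:uc0end}), and in the recursive case through the failing inner call, which populates $\cs[(frame\_level-1)+1]=\cs[frame\_level]$ — so the witness $\phi'=X(A)$ is permanently excluded from the query $(\neg\X(\cs[frame\_level])\wedge\xnf{\phi})^p$; after at most $|S|$ passes this query becomes unsatisfiable and the call returns. Second, the outer loop raises $frame\_level$ only finitely often, and here I split on satisfiability. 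If $\phi$ is satisfiable, let $n+1$ be the length of a shortest satisfying run; by Theorem~\ref{thm:sat} the states of such a run lie outside the corresponding frames of \emph{every} valid conflict sequence, so the exhaustive next-state enumeration inside $try\_satisfy$ cannot be blocked from traversing it, whence $try\_satisfy(\phi,n)$ returns true and the loop halts with SAT at $frame\_level\le n$ (soundness rules out an erroneous earlier UNSAT). If $\phi$ is unsatisfiable, $try\_satisfy$ always returns false, every outer iteration extends $\cs$, and I would show that the fixpoint test of $inv\_found$ must eventually fire over the finite reachable state space.

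The main obstacle is precisely this last step. Since $\phi$ is unsatisfiable, by Theorem~\ref{thm:reasoning} no state reaches a final state, so the ideal ``cannot reach a final state in up to $i$ steps'' sets of Lemma~\ref{lem:cs} are the whole reachable set for every $i$; the termination mechanism is therefore the \emph{saturation} of the finitely many reachable states into the conflict frames rather than a clean descending chain. The delicacy is that the frames $\cs[i]$ are built incrementally and merely \emph{under-approximate} those ideal sets, and that later iterations keep enlarging low-level frames, so the monotone-fixpoint argument for the ideal conflict sequence does not transfer verbatim. To close the gap I would, at each evaluation of $inv\_found(k)$ that fails, extract a witness $s_k\in\bigcap_{0\le j\le k}\cs[j]\setminus\cs[k+1]$ and, using the relative-inductiveness of Item~3 of Definition~\ref{def:cs}, propagate it backwards along transitions to obtain witnesses at every lower level; combined with finiteness of the reachable subgraph this bounds the number of distinct refinements the frames can undergo, hence bounds $frame\_level$, forcing $inv\_found$ to succeed. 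An acceptable alternative is to exhibit a single progress measure over the finite family $2^{cl(\phi)}$ that strictly increases on each non-terminating outer iteration; establishing that such a measure is genuinely monotone despite the incremental frame growth is, I expect, the hardest part of the whole proof.
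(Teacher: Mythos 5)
Your soundness case and your first termination layer (each call of $try\_satisfy$ halts) are correct, and indeed more explicit than the paper's own proof, which is a terse appeal to Lemma~\ref{lem:cdlsc} together with Theorems~\ref{thm:sat} and~\ref{thm:unsat} followed by the assertion that a terminating iteration always exists. (One tacit assumption you share with the paper: the blocking constraint $\psi=\neg\X(\cs[frame\_level])$ in Algorithm~\ref{alg:satisfy} must be re-evaluated after each frame update, or the ``permanent exclusion'' of explored states fails; this is clearly the intended reading.) The genuine gap is the one you flag yourself: you never prove that $inv\_found$ eventually fires on unsatisfiable inputs. You name two repair strategies --- backward propagation of witnesses from failed inclusion checks, or a cross-iteration progress measure --- and concede that neither is carried out, so as written the proposal does not establish termination. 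Moreover, your diagnosis of \emph{why} this step is hard points in the wrong direction. You worry that the algorithm's frames only under-approximate the ideal sets of Lemma~\ref{lem:cs} and that low-level frames keep growing retroactively, so the fixpoint argument in the $(\Rightarrow)$ direction of Theorem~\ref{thm:unsat} (which concerns an \emph{ideal} conflict sequence, not the one the algorithm builds) does not transfer. That observation is accurate, but no transfer is needed.

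The gap closes with a per-snapshot counting argument that sidesteps cross-iteration monotonicity entirely. At the end of any iteration with frame level $k$ and no model found, the current $\cs$ is a valid conflict sequence (Lemma~\ref{lem:cdlsc}) and the initial state $s_0=\{\phi\}$ lies in every frame (Item~1 of Definition~\ref{def:cs}, maintained by $try\_satisfy$). For \emph{this fixed snapshot}, set $I_i=\bigcap_{0\leq j\leq i}\cs[j]$; the $I_i$ are weakly decreasing in $i$ and all nonempty, since each contains $s_0$. If $inv\_found$ fails at every level $i\leq k$, then $I_i\not\subseteq\cs[i+1]$ for each $i$, hence $I_{i+1}=I_i\cap\cs[i+1]\subsetneq I_i$, so $I_0\supsetneq I_1\supsetneq\cdots\supsetneq I_{k+1}$ is a strictly decreasing chain of nonempty sets of states drawn from the finite space $2^{cl(\phi)}$. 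Therefore $k$ is bounded by the number of states, and $inv\_found$ must succeed at some iteration; combined with your correct inner-termination argument and your treatment of the satisfiable case via Theorem~\ref{thm:sat}, this completes the theorem. The key point you missed is that $inv\_found$ is always applied to the \emph{current} frames, whose intersections are automatically monotone in the level index, so the retroactive enlargement of low-level frames across iterations --- the phenomenon driving both of your proposed, heavier repairs --- is simply irrelevant to the argument.
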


\iffalse
\begin{algorithm}
\caption{Implementation of $inv\_found$}\label{alg:inv}
  \begin{algorithmic}[1]      
       \REQUIRE $frame\_level$: The frame level is working on.
       \ENSURE  true or false.
       
       \STATE Let $i = 1$;
       \WHILE {$i \leq frame\_level-1$}
            \IF{$\bigwedge_{1\leq j\leq i} \bigvee\cs[j] \Rightarrow \bigvee\cs[i+1]$ is true}\label{alg:inv:checkbody}
                \RETURN true;
            \ENDIF
            \STATE $i = i+1$;
       \ENDWHILE
       \RETURN false;
     \end{algorithmic}
\end{algorithm}
\fi

\iffalse
We introduce two small examples to illustrate the efficiency of \cdlsc. First consider $\phi = Fa\wedge F\neg a\wedge Fb\wedge Fc\wedge Fd$. In principle, there 
are in total $2^5=32$ states in $T_{\phi}$, and in the worst case we should take into account all these states before finding a satisfying model (trace). 
However, \cdlsc first checks that $Tail\wedge \xnf{\phi}^p$ is unsatisfiable, the conflict causing which is $Fa$ and $F\neg a$. So $\cs[0] = \{Fa\wedge F\neg a\}$.
 Then \cdlsc checks that $\neg X(\bigvee\cs[0]) \wedge \xnf{\phi}^p$ is satisfiable, from whose assignment a next state, 
 e.g. $\phi_1=Fa\wedge Fb\wedge Fc\wedge Fd$, is obtained. Now since $Tail\wedge \xnf{\phi_1}$ is satisfiable, \cdlsc can return that $\phi$ is satisfiable. 
 Comparing to the worst case, \cdlsc only visits two states to find a satisfying model. 
 \fi
 
 \cdlsc is shown how to accelerate the checking of satisfiable formulas in the previous section. 
 For unsatisfiable instances, consider 
 $\phi = (\neg Tail) \U a \wedge (Tail)\R\neg a\wedge (\neg Tail)\U b$. 
 \cdlsc first checks that $Tail\wedge \xnf{\phi}^p$ is unsatisfiable, where the SAT solver returns 
 $c=\{(\neg Tail)\U a, Tail\R \neg a\}$ as the UC. 
 So $c$ is added into $\cs[0]$. Then \cdlsc checks that $(\xnf{\phi}\wedge \neg \X(\cs[0]))^p$ is still unsatisfiable, 
 in which $c=\{(\neg Tail)\U a, Tail\R \neg a\}$ is still the UC. 
 So $c$ is added into $\cs[1]$ as well. Since $\cs[0]\subseteq \cs[1]$ and according to Theorem \ref{thm:unsat}, \cdlsc terminates with  
 the unsatisfiable result. In this case, \cdlsc only visits one state for the whole checking process. For a more general instance 
 like $\phi\wedge \psi$, where $\psi$ is a large $\ltlf$ formula, checking by \cdlsc enables to achieve a significantly improvement 
 compared to the checking by traditional tableau approach. 
 
 Summarily, \cdlsc is a conflict-driven on-the-fly satisfiability checking algorithm, which successfully leads to either an earlier finding of a satisfying model, or the faster termination with the unsatisfiable result.

\section {Experimental Evaluation}\label {sec:exp}

%benchmarks in three categoreis:
%LTL-as-LTLf
%LTLf scalabiity
%LTLf random conjunctions

%\footnote{All artifacts for enabling reproducibility, including benchmark formulas and their generators, are available from the paper website at URL redacted for blind review]}

\begin{figure}
\centering
\vspace*{-0.4cm}  
\includegraphics[scale=0.7]{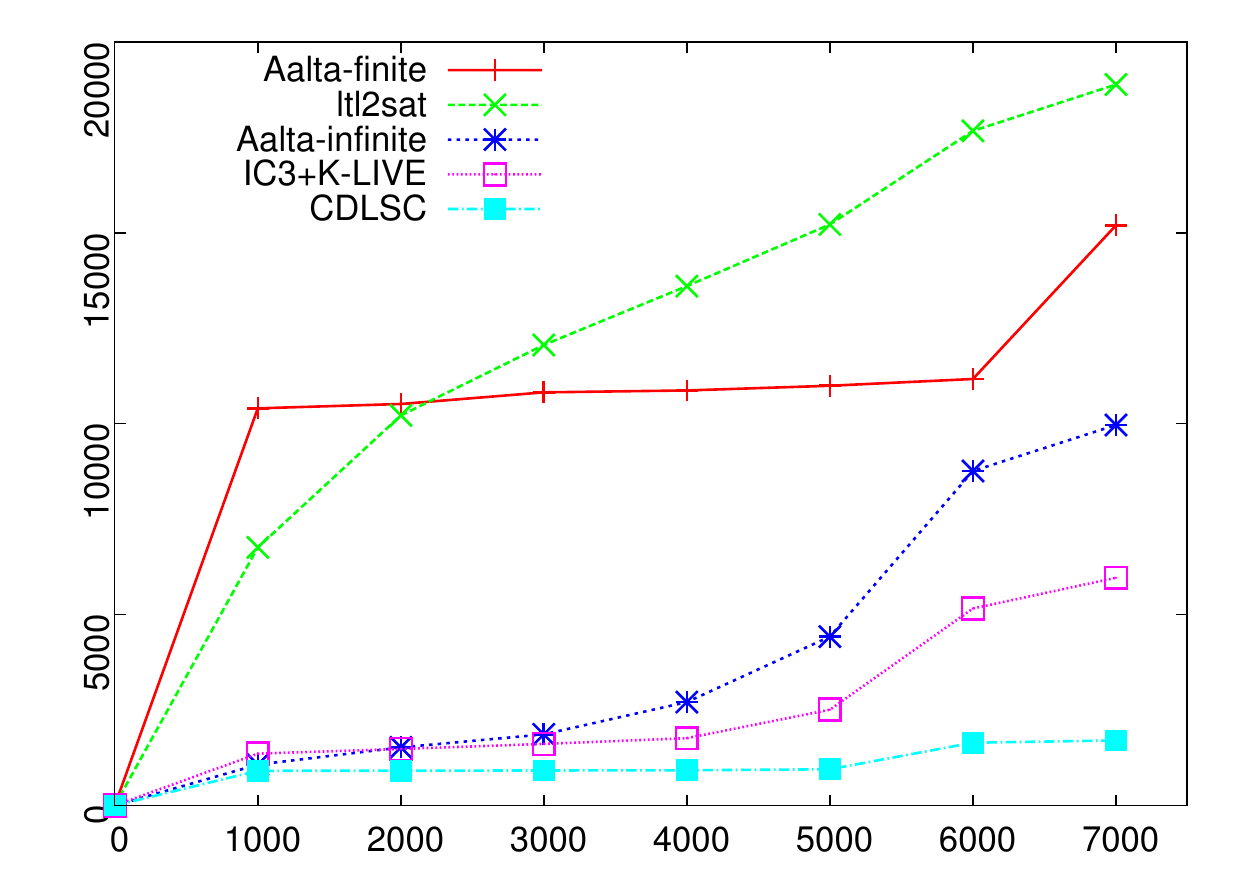}
\caption{Result for $\ltlf$ Satisfiability Checking on LTL-as-$LTL_f$ Benchmarks. The X axis represents the number of benchmarks, and the Y axis is the accumulated checking time (s).}
\label{fig:cactus}
\end{figure}

\begin{table*}[!htb]
\centering
\scalebox{0.8}
{
\begin{tabular}{lcccccccccc}
\hline
Type & Number & Result & IC3+K-LIVE & Aalta-finite & Aalta-infinite & ltl2sat & \cdlsc \\
\hline
Alternate Response &  100 & sat & 134 & 1 & 48 & 123 & 3 \\
Alternate Precedence & 100 & sat & 154 & 3 & 70 & 380 & 4 \\
Chain Precedence &  100 & sat & 127 & 2 & 45 & 83 & 2 \\
Chain Response & 100 & sat & 79 & 1 & 41 & 49 & 2 \\
Precedence & 100 &  sat & 132 & 2 & 14 & 124 & 1 \\
Responded Existence  & 100 & sat & 130 & 1 & 14 & 327 & 1 \\
Response & 100 & sat & 155 & 1 & 41 & 53 & 2 \\
Practical Conjunction & 1000 & varies & 1669 & 19564 & 4443 & 20477 & 115 \\
\hline
\end{tabular}
}
\caption{Results for $\ltlf$ Satisfiability Checking on $LTL_f$-specific Benchmarks.}
\label{tab:ltlf-specific}
\end{table*}

\noindent\textbf{Benchmarks}
We first consider the \emph{LTL-as-$\ltlf$} benchmark, which is evaluated by previous works on $\ltlf$ satisfiability checking \cite{LZPVH14,FG16}. This benchmark consists of 7442 instances that are originally LTL formulas but are treated as $\ltlf$ formulas, 
as both logics share the same syntax. Previous works \cite{LZPVH14,FG16} have shown that the benchmark is useful to test 
the scalability of $\ltlf$ solvers.

Secondly, we consider the 7 \emph{$\ltlf$-specific} patterns that are introduced in recent researches on $\ltlf$, e.g. 
\cite{DMM14,DMM16}, and we create 100 instances for each pattern. As shown in Table \ref{tab:ltlf-specific}, 
it is trivial to check the satisfiability of these $\ltlf$ patterns 
by most tested solvers, as either they have small sizes or dedicated heuristics for $\ltlf$, which are encoded in both Aalta-finite 
and \cdlsc, enable to solve them quickly. Inspired from the observation in \cite{LZPVH13} that an LTL specification 
in practice is often the conjunction of a set of small and frequently-used patterns, we randomly choose a subset of the instances of the 7 patterns to imitate a real $\ltlf$ specification in practice. We generate 1000 such instances as the \emph{practical conjunction} pattern shown in the last row of Table \ref{tab:ltlf-specific}. Unlike the random benchmarks in SAT community, which are often considered not interesting, we argue that the new practical conjunction pattern is a representative for real $\ltlf$ specifications in industry. 

\noindent\textbf{Experimental Setup}
We implement  \cdlsc in C++, and use Minisat 2.2.0 \cite{ES03} as the SAT engine\footnote{https://github.com/lijwen2748/aaltaf}. We compare it with two extant $\ltlf$ satisfiability solvers: Aalta-finite \cite{LZPVH14} and ltl2sat \cite{FG16}. We also compared with the state-of-art LTL solver Aalta-infinite \cite{LZPV15}, using the $\ltlf$-to-LTL satisfiability-preserving reduction described in \cite{GV13}. As LTL satisfiability checking is reducible to model checking, as described in \cite{RV07}, we also compared with this reduction, using nuXmv with the IC3+K-LIVE back-end \cite{CCDGMMMRT14}, 
%an efficient SAT-based model checker integrating IC3 \cite{Bra11} and K-LIVE \cite{CS12} 
as an $\ltlf$ satisfiability checker. %, again reducing $\ltlf$ satisfiability to LTL satisfiability.

We ran the experiments on a RedHat 6.0 cluster with 2304 processor cores
in 192 nodes (12 processor cores per node), running at 2.83 GHz with 48GB of
RAM per node. %The operating system on the cluster is RedHat 6.0, and the running time is measured by the time operation in the system. When we ran the experiments, each tool was ran on a dedicated node, which guarantees that no CPU or memory conflict with other jobs will occur. For each test formula, we set the timeout to be 60 seconds. Excluding those timeout cases, we never met an inconsistence on the results from different solvers.
Each tool executed on a dedicated node with a timeout of 60 seconds, measuring execution time with Unix \texttt{time}. Excluding timeouts, all solvers found correct verdicts for all formulas. All artifacts are available in the supplemental material.

\noindent\textbf{Results}
Figure \ref{fig:cactus} shows the results for $\ltlf$ satisfiability
checking on LTL-as-$\ltlf$ benchmarks. \cdlsc outperforms all other
approaches. On average, \cdlsc performs about 4 times faster than
the second-best approach IC3+K-LIVE (1705 seconds vs. 6075 seconds). 
\cdlsc checks the $\ltlf$ formula directly, while IC3+K-LIVE must take the input of 
the LTL formula translated from the $\ltlf$ formula. As a result, IC3-KLIVE may take extra 
cost, e.g. finding a satisfying lasso for the model, to the satisfiability checking. 
Meanwhile, \cdlsc can benefit from the heuristics dedicated for $\ltlf$ that are proposed in \cite{LZPVH14}.
\iffalse
Notably, Aalta-infinite for $\ltlf$ does not perform better than model
checking, using IC3+K-LIVE for, in contrast to the results for LTL
in \cite{LZPV15}. We conjecture that, since Aalta-infinite is a dedicated 
solver for LTL formulas, its LTL-specific heuristics do not apply well to 
$\ltlf$ formulas. 
\fi
Finally, the performance of ltl2sat is highly tied to 
its performance of unsatisfiability checking as most of the timeout cases for ltl2sat are unsatifiable. 
For Aalta-finite, its performance is restricted by the heavy cost of Tableau 
Construction.

Table \ref{tab:ltlf-specific} shows the results for $\ltlf$-specific experiments. Columns 1-3 show the types of $\ltlf$ formulas under test, the number of test instances for each formula type, and the results by formula type. Columns 4-8 show the checking times by formula types in seconds. The dedicated $\ltlf$ solvers perform extremely fast on the seven scalable pattern formulas (Column 5 and 8), because their heuristics work well on these patterns. For the difficult conjunctive benchmarks, \cdlsc still outperforms all other solvers. 
 
%In summary, our experiemntal results show that dedicated $\ltlf$ satisfiability solvers outperform other state-of-art solvers with \cdlsc performing the best overall. 

\section{Discussion and Concluding Remarks}\label{sec:con}
Bounded Model Checking (BMC) \cite{BCCZ99} is also a popular SAT-based technique, which is however, not necessary to compare. 
There are two ways to apply BMC to $\ltlf$ satisfiability checking. The first one is to check the satisfiability of the LTL formula 
from the input $\ltlf$ formula. \cite{LZPV15} has shown that this approach cannot perform better than IC3+K-LIVE, and the fact of 
\cdlsc outperforming IC3+K-LIVE induces \cdlsc also outperforms BMC. 
The second approach is to check the satisfiability 
of the $\ltlf$ formula $\phi$ directly, by unrolling $\phi$ iteratively. In the worst case, BMC can terminate (with UNSAT) once 
the iteration reaches the upper bound. This is exactly what is implemented in ltl2sat \cite{FG16}. 

In this paper, we introduce a new SAT-based framework, 
based on which we present a conflict-driven algorithm \cdlsc, for $\ltlf$ satisfiability checking.
%The \emph{transition system} is essentially a non-deterministic finite automaton (NFA), with the \emph{final states} corresponding to accepting states of the automaton. 
%Compared to Tableau Construction to generate the NFA from an $\ltlf$ formula \cite{GDGMM14}, this paper and \cite{LZPV15} construct the system by leveraging SAT techniques, which is shown to be superior for $\ltlf$ satisfiability checking. %(consider \cdlsc vs. Aalta-finite, or Aalta-infinite vs. Aalta-finite).
%by comparing \cdlsc to Aalta-finite.  
Our experiments demonstrate that \cdlsc outperforms Aalta-infinite and IC3+K-LIVE, which are designed for LTL satisfiability checking, showing the advantage of a dedicated algorithm for $\ltlf$. Notably, \cdlsc maintains a conflict sequence, which is similar to the state-of-art model checking technique IC3 \cite{Bra11}. 
%Although \cdlsc and IC3 runs in different directions, we highlight their main difference as follows: 
\cdlsc does not require the conflict sequence to be monotone, and simply use the UC from SAT solvers to update the sequence. 
Meanwhile, IC3 requires the sequence to be strictly monotone, and has to compute its dedicated MIC (Minimal Inductive Core) to update 
the sequence.  
%\cdlsc is an implementation of CAR \cite{LZZPV17} for $\ltlf$ satisfiability checking, 
%which is considered as a generalized version of IC3. 
%; using while IC3 has to combine K-LIVE \cite{CS12} to check the satisfiability of $\ltlf$. 
%
%In summary, we propose a SAT-based framework for $\ltlf$ satisfiability checking, and present the novel conflict-driven algorithm \cdlsc based on this framework. Our experimental results show that \cdlsc outperforms all other $\ltlf$ solvers, both when considering LTL-as-$\ltlf$ benchmarks and new benchmarks constructed from common $\ltlf$ formulas and patterns. 
We conclude that \cdlsc outperforms other existing approaches for $\ltlf$ satisfiability checking. 
%We make our implementation publicly available to advance applications depending on $\ltlf$ satisfiability, such as planners, and runtime reasoners.

\noindent{\bf Acknowledgement.} 
We thank anonymous reviewers for the helpful comments. 
This work is supported in part by NASA ECF NNX16AR57G, NSF CAREER Award CNS-1552934, 
NSF grants CCF-1319459 and IIS-1527668, 
NSF Expeditions in Computing project ``ExCAPE: Expeditions in 
Computer Augmented Program Engineering'', NSFC projects No. 6157297, No. 61632005 and No. 61532019, 
and China HGJ project No. 2017ZX01038102-002.

\clearpage
\pagebreak
\balance
\bibliographystyle{named}
\small %LaTeX trick :)
\inputencoding{latin2}
\bibliography{ok,cav,symbolic}
\inputencoding{utf8}

\newpage
\appendix 

\section{Missing Proofs}
\subsection{Proof of Theorem \ref{thm:tnf}}

We first introduce the following lemmas that are useful for the proof. 
\begin{lemma}\label{lem:tnf1}
If $\tnf{\phi}$ is satisfiable, there is a non-empty finite trace $\xi$ such that $\neg Tail\in \xi[i]$ for $0\leq i<|\xi|-1$, 
$Tail \in \xi[|\xi|-1]$ and $\xi\models\tnf{\phi}$.
\end{lemma}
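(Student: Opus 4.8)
The plan is to start from an arbitrary model of $\tnf{\phi}$ and truncate it at the \emph{first} position where $Tail$ becomes true. Concretely, suppose $\eta\models\tnf{\phi}=t(\phi)\wedge\F Tail$. Since $\eta\models\F Tail$, the set of positions carrying $Tail$ is nonempty; let $k$ be the least such position and set $\xi=\eta[0]\eta[1]\cdots\eta[k]$, the prefix of $\eta$ of length $k+1$. By minimality of $k$ we immediately get $Tail\notin\eta[i]$ for $0\le i<k$, i.e.\ the required $\neg Tail\in\xi[i]$ for $0\le i<|\xi|-1$, together with $Tail\in\xi[k]=\xi[|\xi|-1]$; in particular $\xi\models\F Tail$. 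So $\xi$ already has the prescribed $Tail$-shape, and the only thing left to establish is $\xi\models t(\phi)$.

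The heart of the argument is the following claim, proved by structural induction on the NNF subformulas $\psi$ of $\phi$ with the trace position as a parameter: for every $0\le j\le k$, if $\eta_j\models t(\psi)$ then $\xi^{(j)}\models t(\psi)$, where $\xi^{(j)}=\eta[j]\cdots\eta[k]$ is the corresponding suffix of the truncated trace. Instantiating the claim at $\psi=\phi$ and $j=0$ yields $\xi\models t(\phi)$, which finishes the lemma. Only this forward direction is needed, and a quick inspection of the cases confirms that each one appeals solely to the forward induction hypothesis, so no strengthening to a biconditional is required.

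I would then carry out the induction. The literal and Boolean cases ($\tt$, $\ff$, literals, $\wedge$, $\vee$) are immediate, since truncation changes no label at a position $\le k$ and these cases only inspect positions already present in $\xi^{(j)}$. For $t(\X\theta)=\neg Tail\wedge\X(t(\theta))$ the conjunct $\neg Tail$ forces $j\neq k$, hence $j<k$, so the successor suffix $\xi^{(j+1)}$ with $j+1\le k$ still exists to receive the hypothesis. For $t(\N\theta)=Tail\vee\X(t(\theta))$, if $Tail$ holds at $\eta[j]$ then minimality gives $j=k$ and the single-state suffix $\xi^{(k)}$ satisfies the $Tail$ disjunct, whereas if the strong-next disjunct is used we again have $j<k$ and can push the hypothesis to $j+1$; this is precisely where the weak-next translation's $Tail$ disjunct keeps satisfaction intact at the truncated endpoint.

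The main obstacle is the until case $t(\psi_1\U\psi_2)=(\neg Tail\wedge t(\psi_1))\U t(\psi_2)$, where I must guarantee that the witnessing position of the until is not cut off by truncation. The $\neg Tail$ guard on the left argument is decisive: if $t(\psi_2)$ is witnessed at absolute position $j+m$, then every strictly earlier position $j+l$ (with $l<m$) satisfies $\neg Tail$ and hence differs from $k$; combined with $j\le k$ this forces $j+m\le k$, so the entire witnessing interval lies inside $\xi^{(j)}$ and the hypothesis transfers each conjunct. The release case $t(\psi_1\R\psi_2)=(Tail\vee t(\psi_1))\R t(\psi_2)$ is then handled by verifying the release condition at each relative position $m$ with $j+m\le k$: the obligation is carried over from $\eta$ via the hypothesis, and any left-witness it invokes sits strictly before position $k$, where $Tail$ is still false, so it is the $t(\psi_1)$ disjunct (not $Tail$) that transfers. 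Once the induction is complete, instantiating at $\psi=\phi$ and $j=0$ gives $\xi\models t(\phi)$, and the lemma follows.
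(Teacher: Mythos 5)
Your proposal is correct and follows essentially the same route as the paper's own proof of Lemma~\ref{lem:tnf1}: both truncate an arbitrary model of $\tnf{\phi}$ at the first position carrying $Tail$ and then prove by structural induction on $t(\cdot)$ that satisfaction transfers from each suffix of the original trace to the corresponding suffix of the truncation, with the $\neg Tail$ guard in the $\U$-case pinning the witness inside the prefix and the $Tail$ disjuncts handling $\N$ and $\R$ at the endpoint. If anything, your explicit suffix parameter $j$ and your observation that any release left-witness sits strictly before $k$ and so must transfer via $t(\psi_1)$ rather than $Tail$ are slightly more careful than the paper's sketch of the $\R$ case, but the argument is the same.
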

\begin{proof}
Since $\tnf{\phi}$ is satisfiable, there is a non-empty finite trace $\xi'$ such that $\xi'\models\tnf{\phi}$. 
Recall that $\tnf{\phi}$ has the form of $t(\phi)\wedge FTail$, so $\xi'\models\tnf{\phi}$ implies $\xi'\models t(\phi)$ and 
there is $\leq k<|\xi'|$ such that 
$Tail\in\xi'[k]$ and $Tail\not\in \xi'[j]$ for every $j< k$. We define $tp(\xi') = \xi'[0]\xi'[1]\ldots\xi'[k]$, and first 
prove that $\xi'\models t(\phi)$ implies $tp(\xi')\models t(\phi)$. Let 
$\xi = tp(\xi')$, and we prove by induction over the type of $\phi$ that $\xi\models t(\phi)$.
\begin{enumerate}
    \item If $\phi = \tt$, then $t(\phi)=\tt$ and of course $\xi\models t(\phi)$;

    \item If $\phi = l$ is a literal, then $t(\phi) = l$ and $\xi'\models t(\phi)$ implies $l\in \xi'[0] = \xi[0]$. Therefore, 
    $\xi\models t(\phi)$;
    
    \item If $\phi = \phi_1\wedge \phi_2$, then $t(\phi) = t(\phi_1)\wedge t(\phi_2)$, and $\xi'\models t(\phi)$ implies 
    $\xi'\models t(\phi_1)$ and $\xi'\models t(\phi_2)$. By hypothesis assumption, $\xi'\models t(\phi_1)$ implies 
    $\xi\models t(\phi_1)$ and $\xi'\models t(\phi_2)$ implies
    $\xi\models t(\phi_2)$. So $\xi\models t(\phi)$ is true. If $\phi = \phi_1 \vee \phi_2$, the proof is similar; 
    
    \item If $\phi = X\psi$, then $t(\phi) = \neg Tail \wedge X (t(\psi))$, and $\xi'\models t(\phi)$ implies that $Tail\not\in\xi'[0]$ 
    and $\xi'_1\models t(\psi)$. Let $\xi_1 = tp (\xi'_1)$. By hypothesis assumption, $\xi'_1\models t(\psi)$ implies 
    $\xi_1\models t(\psi)$ is true. Moreover, because 
    $Tail\not\in\xi'[0]$, $\xi = tp (\xi') = \xi'[0]\cdot tp (\xi'_1) = \xi'[0]\cdot\xi_1$ from its definition. 
    As a result, $\xi\models t(\phi)$ is true;
    
    \item If $\phi = N\psi$, then $t(\phi) = Tail \vee X(t(\psi)) = Tail \vee (\neg Tail \wedge X(t(\psi)))$, 
    and $\xi'\models t(\phi)$ implies that $Tail\in \xi'[0]$ or 
    $\xi'\models \neg Tail\wedge X(t(\psi))$. In the first case, $\xi = \xi'[0]$ and obviously $\xi\models t(\phi)$. 
    For the second case, the proof is the same as that if $\phi = X\psi$;
    
    \item If $\phi = \phi_1 U\phi_2$, then $t(\phi) = (\neg Tail\wedge t(\phi_1)) U t(\phi_2)$, and $\xi'\models t(\phi)$ implies that 
    there is $0\leq i<|\xi'|$ such that $\xi'_i\models t(\phi_2)$ and for every $0\leq j<i$ it holds 
    $\xi'_j\models \neg Tail\wedge t(\phi_1)$. As a result, we have that $\xi = tp (\xi') = \xi'[0]\ldots\xi'[i-1]\cdot tp(\xi'_i)$, 
    and thus 
    $\xi_i = tp(\xi'_i)$ and $\xi_j = \xi'[j]\ldots\xi'[i-1]\cdot tp(\xi'_i) = tp(\xi'_j)$. By hypothesis assumption, 
    $\xi_i'\models t(\phi_2)$ implies $\xi_i\models t(\phi_2)$ and $\xi'_j\models \neg Tail \wedge t(\phi_1)$ implies 
    $\xi_j\models \neg Tail \wedge t(\phi_1)$. As a result, $\xi\models t(\phi)$ is true;
    
    \item If $\phi = \phi_1 R\phi_2$, then $t(\phi) = (Tail\vee t(\phi_1)) R t(\phi_2)$, and $\xi'\models t(\phi)$ implies that 
    for all $0\leq i<|\xi'|$ it holds that, $\xi'_i\models t(\phi_2)$ or there is $0\leq j<i$ such that 
    $\xi'_j\models Tail\vee t(\phi_1)$. Since $\xi = tp(\xi')$, so $\xi_i = tp(\xi'_i)$ for $0\leq i< |\xi|$. 
    By hypothesis assumption, $\xi'_i\models t(\phi_2)$ implies $\xi_i\models t(\phi_2)$ for 
    every $0\leq i<|\xi|-1$. Moreover, it is true that $Tail\in \xi[|\xi|-1]$, which implies $\xi[|\xi|-1]\models Tail\vee t(\phi_1)$. 
    Therefore, we have that $\xi\models t(\phi)$. 
\end{enumerate}
Because $tp(\xi')\models t(\phi)$ is true, and $tp(\xi')\models FTail$ is obviously true, we prove finally that 
$\xi=tp(\xi')\models \tnf{\phi}$.
\end{proof}

\begin{lemma}\label{lem:tnf2}
Let $\xi,\xi'$ are two non-empty finite traces satisfying $|\xi| = |\xi'|$ and $\xi'[i] = \xi[i]$ for $0\leq i< |\xi|-1$ as well as $\xi'[|\xi|-1] = \xi[|\xi|-1]\cup\{Tail\}$. Then $\xi\models \phi$ iff $\xi'\models \tnf{\phi}$.
\end{lemma}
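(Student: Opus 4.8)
The plan is to prove Lemma~\ref{lem:tnf2} by structural induction on the formula $\phi$ (assumed in NNF), strengthening the statement so that the induction carries through on suffixes. The natural strengthening is: for any index $0 \le m < |\xi|$, writing $\xi_m$ and $\xi'_m$ for the corresponding suffixes (which still satisfy the hypotheses of the lemma, since they agree everywhere except that $\xi'_m$ carries an extra $Tail$ exactly in its last position), we have $\xi_m \models \phi$ iff $\xi'_m \models t(\phi)$. The lemma itself is then the case $m=0$ together with the trivial observation that $\xi' \models \F Tail$ holds because $Tail \in \xi'[|\xi'|-1]$, so $\xi'_0 \models t(\phi)$ is equivalent to $\xi'_0 \models t(\phi) \wedge \F Tail = \tnf{\phi}$.

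The key cases to carry out, following the clauses in the definition of $t(\cdot)$, are as follows. For $\tt$, $\ff$, and literals $t$ acts as the identity and the two traces agree on atoms of $\P$ (they differ only in the auxiliary atom $Tail$), so the equivalence is immediate. The Boolean cases $\wedge$ and $\vee$ follow directly from the induction hypothesis since $t$ distributes over them. The interesting cases are the temporal ones. For $\X\psi$: $\xi_m \models \X\psi$ requires a successor, i.e. $m < |\xi|-1$ and $\xi_{m+1}\models\psi$; on the other side $t(\X\psi) = \neg Tail \wedge \X t(\psi)$, and since $Tail \in \xi'[i]$ \emph{only} at the final position, $\neg Tail \in \xi'[m]$ is equivalent to $m < |\xi|-1$, which is exactly the condition that a successor exists, and then the induction hypothesis on $\psi$ at index $m+1$ closes the case. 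The $\N$ case is dual: $t(\N\psi) = Tail \vee \X t(\psi)$ accounts for the weak-next semantics, where at the last position $Tail$ holds and makes the disjunct true (matching $\N\psi$ being vacuously true there), while at earlier positions $\neg Tail$ forces the strong-next disjunct and reduces to the $\X$ analysis.

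For the $\U$ and $\R$ cases I would unfold the semantic definition and use the fact that the witnessing/guarding indices range over $0 \le i,j < |\xi| = |\xi'|$. For $\U$, since $t(\phi_1 \U \phi_2) = (\neg Tail \wedge t(\phi_1)) \U t(\phi_2)$, the added conjunct $\neg Tail$ costs nothing on the prefix positions $j<i$ where the witness $i$ satisfies $\xi_i \models \phi_2$: because $Tail$ occurs only at the final position, $\neg Tail$ holds automatically at every position strictly before the last, and by induction the $t(\phi_1)$, $t(\phi_2)$ conjuncts match $\phi_1$, $\phi_2$. The $\R$ case is symmetric and is where I expect the \textbf{main subtlety} to lie: here $t(\phi_1 \R \phi_2) = (Tail \vee t(\phi_1)) \R t(\phi_2)$, and one must check that at the \emph{last} position the disjunct $Tail$ is what legitimately discharges the release obligation on $\xi'$, precisely mirroring how finite-trace $\R$ is satisfied vacuously at the end. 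Verifying that the extra $Tail$/$\neg Tail$ annotations align exactly with the weak-next and release boundary behavior at the final state, in both directions of the iff, is the delicate bookkeeping; all other cases are routine once the suffix-strengthened induction hypothesis is in place.
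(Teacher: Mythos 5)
Your proposal is correct and takes essentially the same route as the paper: structural induction on the (NNF) formula $\phi$, with the induction hypothesis applied to suffix pairs $\xi_m,\xi'_m$, which still satisfy the lemma's hypotheses since $Tail$ appears exactly at the last position. The only difference is a minor streamlining — you prove the equivalence for $t(\phi)$ and discharge $\F Tail$ once at the end (it holds trivially because $Tail\in\xi'[|\xi'|-1]$), whereas the paper carries $\tnf{\cdot}=t(\cdot)\wedge \F Tail$ through each case and redistributes the $\F Tail$ conjunct; and the $\R$-case bookkeeping you flag does go through, since a release obligation discharged by the $Tail$ disjunct can only occur at the final position, which forces $t(\phi_2)$ to hold at every position and thus exactly mirrors the vacuous finite-trace satisfaction of $\R$.
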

\begin{proof}
  We prove by induction over the type of $\phi$. 
 
 \begin{enumerate}
    \item If $\phi$ is $\tt$, $\ff$ or a literal $l$, obviously $\xi\models\phi$ holds iff $\xi'\models\tnf{\phi}$ holds;
    
    \item If $\phi = \neg\psi$, then $\xi\models\phi$ holds iff $\xi\not\models\psi$ holds. By hypothesis assumption, $\xi\not\models\psi$ 
    holds iff $\xi'\not\models\tnf{\psi}$ holds, which means $\xi\models\phi$ holds iff $\xi'\models\tnf{\phi}$ holds;
    
    \item If $\phi = X\psi$, then $\xi\models\phi$ holds iff $|\xi|>1$ and $\xi_1\models\psi$ holds. By hypothesis assumption, 
    $\xi_1\models\psi$ holds iff $\xi'_1\models\tnf{\psi}$ holds, and $\xi'_1\models\tnf{\psi}$ holds iff 
    $\xi'\models\neg Tail\wedge X(\tnf{\psi})$ holds 
    (because $\neg Tail\in\xi'[0]$). As a result, we have the following equations:
    \begin{align*}
    & \xi\models X\psi\\
    & \Leftrightarrow\ \ \xi'\models\neg Tail\wedge X(\tnf{\psi})\\
    & \Leftrightarrow\ \ \xi'\models\neg Tail\wedge X(t(\psi)\wedge FTail)\\
    & \Leftrightarrow\ \ \xi'\models\neg Tail\wedge X(t(\psi))\wedge FTail
    \end{align*}
    Since $\tnf{\phi} = \neg Tail\wedge X(t(\psi))\wedge FTail$, so $\xi\models\phi$ iff $\xi'\models\tnf{\phi}$ is true;
    
    \item If $\phi = \phi_1\wedge\phi_2$, then $\xi\models\phi$ holds iff both $\xi\models\phi_1$ and $\xi\models\phi_2$ hold. 
    By hypothesis assumption, we have $\xi\models\phi_1$ holds iff $\xi'\models\tnf{\phi_1}$ holds, and $\xi\models\phi_2$ holds 
    iff $\xi'\models\tnf{\phi_2}$ holds. As a result, 
    $\xi\models\phi$ holds iff $\xi'\models\tnf{\phi_1}\wedge\tnf{\phi_2} = t(\phi_1)\wedge t(\phi_2)\wedge FTail = t(\phi_1\wedge \phi_2)\wedge FTail = \tnf{\phi_1\wedge\phi_2}$ holds;
    
    \item If $\phi =  \phi_1 U\phi_2$, then $\xi\models\phi$ holds iff there exists $0\leq i<|\xi|$ such that $\xi_i\models\phi_2$, and 
    for every $0\leq j< i$ it holds that $\xi_j\models\phi_1$. By hypothesis assumption, $\xi_i\models\phi_2$ holds iff 
    $\xi'_i\models\tnf{\phi_2}$ holds, and moreover, $\xi_j\models\phi_1$ holds iff $\xi'_j\models\tnf{\phi_1}$ holds. 
    Because of $0\leq j<i$ and $0\leq i<|\xi|$, $j$ does not equal to $|\xi|-1$, which means $\neg Tail\in\xi'[j]$. As a result, 
    $\xi'[j]\models \neg Tail\wedge \tnf{\phi_1}$. Therefore, $\xi'_i\models\phi_2$ holds and for every $0\leq j<i$, 
    $\xi'_j\models \neg Tail\wedge \tnf{\phi_1}$ is true, which means $\xi'\models (\neg Tail\wedge \tnf{\phi_1}) U \tnf{\phi_2}$ 
    is true. Finally, we have 
    \begin{align*}
    & \xi\models \phi_1 U\phi_2\\
    & \Leftrightarrow \xi'\models (\neg Tail\wedge \tnf{\phi_1}) U \tnf{\phi_2}\\
    & \Leftrightarrow \xi'\models (\neg Tail\wedge t(\phi_1)\wedge FTail) U (t(\phi_2)\wedge FTail)\\
    & \Leftrightarrow \xi'\models (\neg Tail\wedge t(\phi_1)) U t(\phi_2)\wedge FTail\\
    & \Leftrightarrow \xi'\models \tnf{\phi}
    \end{align*}
    The proof is done.
 \end{enumerate} 
\end{proof}

We are ready now to prove Theorem \ref{thm:tnf}. 

\begin{proof}
    ($\Rightarrow$) If $\phi$ is satisfiable, there is a non-empty finite trace $\xi$ such that $\xi\models\phi$. From Lemma 
    \ref{lem:tnf2}, we know that there is a corresponding finite trace $\xi'$ satisfying $|\xi| = |\xi'|$ and $\xi'[i] = \xi[i]$ for $0\leq i< |\xi|-1$ as well as $\xi'[|\xi|-1] = \xi[|\xi|-1]\cup\{Tail\}$ such that $\xi'\models\tnf{\phi}$. So $\tnf{\phi}$ 
    is satisfiable. 
    
    ($\Leftarrow$) If $\tnf{\phi}$ is satisfiable, there is a finite trace $\xi'$ satisfying $Tail\not\in \xi'[i]$ for $0\leq i< |\xi|-1$ and $Tail\in\xi'[|\xi|-1]$ such that $\xi'\models \tnf{\phi}$, from Lemma \ref{lem:tnf1}. 
    Moreover, according to Lemma \ref{lem:tnf2}, there is a corresponding 
    finite trace satisfying $|\xi| = |\xi'|$ and $\xi[i] = \xi'[i]$ for $0\leq i< |\xi|-1$ as well as $Tail\not\in \xi[|\xi|-1]$ 
    such that $\xi\models\phi$. So $\phi$ is satisfiable. 
\end{proof}

\subsection{Proof of Theorem \ref{thm:assign}}
\begin{proof}
$(\Rightarrow)$ Base case: when $\phi$ is a literal, Next, Unitl or Release formula, it is 
true since there is only one propositional assignment of $\phi^p$, i.e. $A= \{\phi\}$. 
Inductive step: if $\phi = \phi_1\wedge\phi_2$, $\xi\models\phi$ 
implies $\xi\models\phi_1$ and $\xi\models\phi_2$. By assumption hypothesis, there is $A_i$ of $\phi_i^p$ ($i=1,2$) such that 
$\xi\models\bigwedge A_i$. Let $A = A_1\cup A_2$, and a consistent $A$, in which either $\psi$ or $\neg \psi$ cannot be, 
must exists ($A$ may not be unique because $A_1$ and $A_2$ may not be unique). 
Otherwise, there is $\psi\in A_1$ and $\neg\psi\in A_2$ 
such that $\xi$ cannot model $\bigwedge A_1$ and $\bigwedge A_2$ at the same time, which is a contradiction. So $A$ is a propositional  assignment of $\phi^p$ and $\xi\models\bigwedge A$. The proof for $\phi=\phi_1\vee\phi_2$ is similar. 

$(\Leftarrow)$ $A$ is a propositional assignment of $\phi^p$, so $A\models\phi^p$ implies $(\bigwedge A)\Rightarrow \phi$. 
Therefore, $\xi\models \bigwedge A$ implies that $\xi\models \phi$.
\end{proof}

\subsection{Proof of Theorem \ref{thm:xnf}}
\begin{proof}
First, $\xnf{\phi}$ can be constructed recursively as follows:  (1) $\xnf{\phi} = \phi$, when $\phi$ is $\tt,\ff$, a literal or 
$\X\psi$ (Note $\phi$ is $\N$-free);
(2) $\xnf{\phi_1\ o\ \phi_2} = \xnf{\phi_1}\ o\ \xnf{\phi_2}$, where $o$ is $\wedge$ or $\vee$;
(3) $\xnf{\phi_1 \U\phi_2} = \xnf{\phi_2}\vee (\xnf{\phi_1}\wedge \X(\phi_1 \U\phi_2))$; and 
(4) $\xnf{\phi_1 \R\phi_2} = \xnf{\phi_2}\wedge (\xnf{\phi_1}\vee \X(\phi_1 \R\phi_2))$;
%
\iffalse
\begin{itemize}[noitemsep,topsep=0pt]
    \item $\xnf{\phi} = \phi$, when $\phi$ is $\tt,\ff$ a literal or $X\psi$;
    \item $\xnf{\phi_1\ o\ \phi_2} = \xnf{\phi_1}\ o\ \xnf{\phi_2}$, where $o$ is $\wedge$ or $\vee$;
    \item $\xnf{\phi_1 U\phi_2} = \xnf{\phi_2}\vee (\xnf{\phi_1}\wedge Tail\wedge X(\phi_1 U\phi_2))$;
    \item $\xnf{\phi_1 R\phi_2} = \xnf{\phi_2}\wedge (\xnf{\phi_1}\vee \neg Tail\vee X(\phi_1 R\phi_2))$;
\end{itemize} 
\fi
%
Since the construction is built on two expansion rules of Unitl and Release, and the expansion stops once the Until and Release are in the scope of Next, it preserves 
 the equivalence $\phi\equiv\xnf{\phi}$, and the cost is at most linear.
\end{proof}

\subsection{Proof of Lemma \ref{lem:reachable}}
\begin{proof}
Basically, for $s\in T(s_0, \sigma)$ ($\sigma\in \Sigma$), since there is a propositional assignment 
$A$ of $\xnf{\bigwedge s_0}^p$ such that $\sigma \supseteq L(A)$ and $s = X(A)$, $s$ is reachable from $s_0$ in one step. 
Inductively, assume $s$ is reachable from $s_0$ in $k$ ($k\geq 1$) steps. For $s'\in T(s, \sigma)$ ($\sigma\in \Sigma$), similarly 
we have $s'$ is reachable from $s$ in one step. As a result, $s'$ is reachable from $s_0$ in $k+1$ steps.
\end{proof}

\subsection{Proof of Theorem \ref{thm:reasoning}}
We first introduce the following lemma that is used for the proof.

\begin{lemma}\label{lem:final}
$s$ is a final state of $T_{\phi}$, iff there is a finite trace $\xi$ with $|\xi|=1$ such that $\xi\models s$. 
\end{lemma}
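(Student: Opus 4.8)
The plan is to convert the Boolean condition of Definition \ref{def:final} into a statement about one-state traces, using the equivalence $s\equiv\xnf{s}$ (Theorem \ref{thm:xnf}) together with the assignment/trace correspondence of Theorem \ref{thm:assign}. The guiding intuition is that fixing $Tail$ to true, together with the restriction $|\xi|=1$, pins down exactly the semantics of the \emph{last} position of a trace: a single-state trace never satisfies any $\X\theta$, and there both $\psi_1\U\psi_2$ and $\psi_1\R\psi_2$ collapse to their right operand $\psi_2$.

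First I would isolate the key structural fact, call it the \emph{masking property}. Since the input is in TNF, every Until subformula of $\phi$ has the shape $(\neg Tail\wedge\alpha)\U\beta$, every Release subformula the shape $(Tail\vee\alpha)\R\beta$, and every strong-next subformula the shape $\neg Tail\wedge\X\gamma$; these shapes are inherited by all members of any state $s\subseteq cl(\phi)$, since such members are subformulas of $\phi$. Consequently, in $\xnf{s}=\bigwedge_{\psi\in s}\xnf{\psi}$ every atom $\X\theta$ occurs either inside a conjunct $\neg Tail\wedge\cdots\wedge\X\theta$ (from the strong-next or the Until expansion) or inside a disjunct $Tail\vee\cdots\vee\X\theta$ (from the Release expansion). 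Hence, once $Tail$ is fixed to true, the truth value of $(\xnf{s})^p$ is independent of every Next-atom: the former conjuncts are killed by $\neg Tail$ and the latter disjuncts are already satisfied by $Tail$. I expect this bookkeeping to be the main obstacle, as it must be checked by induction on the structure of the TNF subformulas to be certain that no unguarded $\X\theta$ survives.

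For the ($\Rightarrow$) direction I would take a satisfying Boolean assignment $B$ of $Tail\wedge(\xnf{s})^p$, so $Tail\in B$. By the masking property I may overwrite every Next-atom in $B$ with $\ff$ and still have $B\models(\xnf{s})^p$. Reading off the atoms over $\mathcal{P}\cup\{Tail\}$ that $B$ makes true yields a single-state trace $\xi$ with $|\xi|=1$; because all Next-atoms are now false and are indeed falsified at a one-state trace, $\xi$ realizes precisely this assignment, i.e. $\xi\models\bigwedge B$. The reverse direction of Theorem \ref{thm:assign} then gives $\xi\models\xnf{s}$, hence $\xi\models s$ by Theorem \ref{thm:xnf}.

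For the ($\Leftarrow$) direction I start from $\xi$ with $|\xi|=1$ and $\xi\models s$, and first note that adding $Tail$ to the single position cannot destroy satisfaction: at a one-state trace only the right operands of the Until/Release subformulas are relevant, and these do not mention the guarding $Tail$. Thus $\xi'$ with $\xi'[0]=\xi[0]\cup\{Tail\}$ still satisfies $s$, so $\xi'\models\xnf{s}$. Applying the forward direction of Theorem \ref{thm:assign} to $\xi'$ produces a propositional assignment $A$ of $(\xnf{s})^p$ with $\xi'\models\bigwedge A$; since $|\xi'|=1$ all Next-atoms are false in it, while $Tail\in\xi'[0]$ makes $A$ together with $Tail$ a satisfying assignment of $Tail\wedge(\xnf{s})^p$. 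Therefore $s$ is a final state, which closes the biconditional.
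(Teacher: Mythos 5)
Your proof is correct and follows essentially the same route as the paper's: both rest on the observation that in TNF every Next-atom of $(\xnf{s})^p$ is neutralized once $Tail$ is true, so satisfying assignments of $Tail\wedge(\xnf{s})^p$ correspond, via Theorem~\ref{thm:assign} and $s\equiv\xnf{s}$, to one-state traces satisfying $s$. If anything you are more careful than the paper, whose proof compresses the argument into ``every Next subformula in $s$ is associated with $\neg Tail$'' and ``obviously $\xi\models s$'': your explicit treatment of the Release case (where the Next-atom is \emph{disjoined} with $Tail$, so one must overwrite it with $\ff$ rather than conclude it is absent from the assignment) and of the ($\Leftarrow$) direction (where the one-state trace need not contain $Tail$, so $Tail$ must be added to the single position and satisfaction shown to be preserved) spells out two steps the paper leaves implicit.
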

\begin{proof}
From Definition \ref{def:final}, $s$ is a final state iff there is a propositional assignment $A$ of the Boolean formula $Tail\wedge (\xnf{s})^p$ and 
$Tail\in A$. Recall that every Next subformula in $s$ is associated with $\neg Tail$, so $Tail\in A$ holds iff no Next subformula is in $A$, and thus iff $L(A)\models \xnf{s}^p$ holds.  Let $\xi = \sigma$ ($\sigma\in \Sigma$) such that $\sigma\supseteq L(A)$, 
and obviously $\xi\models s$. 
\end{proof}

Now we start to prove Theorem \ref{thm:reasoning}.
\begin{proof}
$(\Rightarrow)$ Since $\phi$ is satisfiable, there is a finite trace $\xi\models\phi$. Assume $|\xi| = n (n>0)$. Based on Theorem 
\ref{thm:assign}, there is a propositional assignment $A_0$ of $\xnf{\phi}^p$ such that $\xi\models \bigwedge A_0$. 
And according to Definition 
\ref{def:ts}, there is a transition $s_1\in T(s_0, \sigma_0)$ in $T_{\phi}$ where $s_0 = \phi$, $\sigma_0\supseteq L(A_0)$ and 
$s_1 = X(A_0)$. 
Moreover, we have that $\xi_1\models s_1$. Recursively,, we can prove that for $n>i\geq 0$, 
there is a transition $s_{i+1}\in T(s_i, \sigma_i)$ in $T_{\phi}$ such that $\sigma_i\supseteq L(A_i)$, $s_{i+1} = X(A_i)$ for some propositional assignment $A_i$ 
of $\xnf{s_i}^p$, and $\xi_{i+1}\models s_{i+1}$ holds. For $i=n-1$, since $|\xi_{i}| = 1$ and $\xi_{i}\models s_i$, $s_i$ is a final state according to Lemma \ref{lem:final}, and it is reachable from $s_0$ based on Lemma \ref{lem:reachable}.

    ($\Leftarrow$) Let $s$ be a final state in $T_{\phi}$, and it is reachable from the initial state $s_0$ from Lemma \ref{lem:reachable}. Assume a run $r=s_0,\ldots,s_{n-1}, s (n>=0)$ (when $n=0$, $s=s_0$ is the initial state) of $T_{\phi}$ on $\xi'=\sigma_0,\sigma_1,\ldots,\sigma_{n-1}$ leads from $\phi$ to $s$. Moreover according to Lemma \ref{lem:final}, there is a finite trace $\xi''$ with $|\xi''|=1$ such that $\xi''\models s$. Let $\xi = \xi'\cdot\xi'' = \sigma_0\sigma_1,\ldots\sigma_n (n\geq 0)$ where $\xi''=\sigma_n$, and now we prove that $\xi\models \phi$.  The proof can be achieved by induction from $n$ to $0$. Basically, $(\xi_n = \sigma_n)\models s$ is obviously true. Inductively assume $\xi_i\models s_{i}$ for $n\geq i\geq 1$, so $\xi_{i-1}= \xi[i-1]\cdot\xi_i$ satisfies $\xi[i-1]\supseteq L$ and $\xi_i\models s_i$ for some $s_{i}\in T(s_{i-1}, L)$ from the definition of $T_{\phi}$, which means $\xi_{i-1}\models s_{i-1}$. When $i=0$, we prove that $(\xi = \xi_0)\models (s_0=\phi)$.
\end{proof}

\subsection{Proof of Lemma \ref{lem:cdlsc}}
\begin{proof}
First, \cdlsc sets $\cs[0] = \{\phi\}$ after checking $Tail\wedge\xnf{\phi}^p$ is unsatisfiable, which meets Item 2 of Definition \ref{def:cs}. Secondly after each iteration $i\geq 0$, $try\_satisfy$ guarantees that $\{\phi\}$ is added into each $\cs[i]$ if no model is found, which meets Item 1 of Definition \ref{def:cs}. %Finally, 
By enumerating Line \ref {alg:satisfy:uc0end} and \ref{alg:satisfy:ucend} in $try\_satisfy$, we have that $\xnf{s}\wedge \neg \X(\cs[i])$  is unsatisfiable for $s\in\cs[i+1](0\leq i\leq |\cs|-1)$, which meets Item 3 of Definition \ref{def:cs}.  So $\cs$ is a conflict sequence after each iteration with no model found.
\end{proof}

\subsection{Proof of Theorem \ref{thm:terminate}}
\begin{proof}
    \cdlsc runs iteratively, so \cdlsc terminates iff either the procedure $try\_satisfy$ or $inv\_found$  returns true for some iteration. From Lemma \ref{lem:cdlsc}, $\cs$ is a conflict sequence after each iteration if no model found .          
    After each iteration, $try\_satisfy$ returns true iff a final state 
    is found (Line \ref{alg:satisfy:satstart}-\ref{alg:satisfy:satend}) based on Theorem \ref{thm:sat}. 
    Meanwhile, $inv\_found$ returns true iff $\phi$ is 
    unsatisfiable because of Theorem \ref{thm:unsat}. As a result, there is always such an iteration, after which \cdlsc can 
    terminate and terminate correctly.
\end{proof}

\end{document}